\newtheorem{theorem}{Theorem}
\newtheorem{claim}[theorem]{Claim}
\newtheorem{proposition}[theorem]{Proposition}
\theoremstyle{definition}
\newenvironment{definition}%
   {\pushQED{\hfill$\blacktriangleleft$}\defaux}%
   {\popQED\enddefaux}
\newenvironment{algorithm}%
   {\pushQED{\hfill$\lozenge$}\algaux}%
   {\popQED\endalgaux}
\tikzset{algpxIndentLine/.style={draw,line width=0.5pt}}
\algnewcommand\algorithmicforeach{\textbf{for each}}
\pretocmd{\ForEach}{\algpx@endCodeCommand}{}{}
\algrenewcommand\alglinenumber[1]{\footnotesize #1~}
\newenvironment{pseudocode}[1]
{\begin{code}[!thb]%
  \hrule\vskip -2pt \caption{\fontsize{9}{10}\selectfont #1}%
  \vskip 3pt
  \hrule\hbox{}\begin{algorithmic}[1]\ignorespaces}
{\end{algorithmic}
  \vskip 3pt \hrule
  \end{code}}
\def\correcthref{\hyper@anchor{\@currentHref}}
\newcommand\FS{\Phi} 
\newcommand\FSa{\Psi} 
\newcommand\parent{{\shortuparrow}} 
\newcommand\setm{\smallsetminus}
\newcommand\N{\mathbb N} 
\newcommand\E{\mathbb O} 
\newcommand\F{\mathbb F} 
\newcommand\D{\mathbb D} 
\newcommand\f{\mathsf f} 
\newcommand\type{\qopname\relax o{\mathsf{tp}}}
\newcommand\up{\qopname\relax o{\mathtt{up}}}
\newcommand\pt{\hbox{\Large.}}
\newcommand\orel{\ll} 
\newcommand\subc{\Subset} 
\renewcommand\bullet{\hbox{\textbullet}}
\let\preceq\preccurlyeq  
\DeclareFontFamily{U}{mathb}{\hyphenchar\font45}
\DeclareFontShape{U}{mathb}{m}{n}{
      <5> <6> <7> <8> <9> <10> gen * mathb
      <10.95> mathb10 <12> <14.4> <17.28> <20.74> <24.88> mathb12
      }{}
\DeclareSymbolFont{mathb}{U}{mathb}{m}{n}
\DeclareMathSymbol\semle{3}{mathb}{"84}  
\DeclareMathSymbol\semge{3}{mathb}{"85}
\renewcommand\setminus{\mathbin{\mskip-0.6\thinmuskip\smallsetminus\mskip-0.6\thinmuskip}}
\def\(#1,#2,#3){\langle #1,#2,#3\rangle} 
\newcommand{\rot}[3]{#3#2#1} %
\newenvironment{itemz}[1][3pt]{%
\setitemize{topsep=#1,noitemsep,leftmargin=1.5\parindent,labelwidth=\parindent,labelsep=3pt,align=parleft}%
\begin{itemize}}{\end{itemize}}
\def\textsf#1{\hbox{$\xsf#1\endxsf$}}
\def\xsf#1#2\endxsf{%
  \ifx#1/\mathsf{\mkern-1mu/\mkern-1mu}\else\mathsf{#1}\fi
  \ifthenelse{\equal{#2}{}}{}{\xsf#2\endxsf}}
\newenvironment{Keywords}[1]{\IEEEkeywords}{\endIEEEkeywords}
\title{\bfseries\fontsize{17.4}{15}\selectfont
Synchronizing Many Filesystems in Near Linear Time
}
\author{\fontsize{12.4}{12}\selectfont
Elod P. Csirmaz\IEEEauthorrefmark1
\thanks{\IEEEauthorrefmark1e-mail: \rot{\rot{maz.}{csir}{ep}com}{@}{elod}}
and
Laszlo Csirmaz\IEEEauthorrefmark2\thanks{\IEEEauthorrefmark2e-mail: csirmaz@renyi.hu
\space R\'enyi Institute, Budapest, and UTIA, Prague.}}
\begin{document}
\maketitle

\begin{abstract}
Finding a provably correct subquadratic synchronization algorithm for many
filesystem replicas is one of the main theoretical problems in Operational
Transformation (OT) and Conflict-free Replicated Data Types (CRDT)
frameworks.
Based on the Algebraic Theory of Filesystems, which incorporates
non-commutative filesystem commands natively, we developed and built a
proof-of-concept implementation of an algorithm suite which synchronizes an
arbitrary number of replicas. The result is provably correct, and the
synchronized system is created in linear space and time after an initial
sorting phase. It works by identifying conflicting command pairs and
requesting one of the commands to be removed. The method can be guided to
reach any of the theoretically possible synchronized states.
The algorithm also allows asynchronous usage. After the client sends a
synchronization request, the local replica remains available for further
modifications. When the synchronization instructions arrive, they can be
merged with the changes made since the synchronization request. The suite
also works on filesystems with directed acyclic graph-based path structure
in place of the traditional tree-like arrangement. Consequently, our
algorithms apply to filesystems with hard or soft links as long as the links
create no loops.

\begin{Keywords}{Keywords}
file synchronization;
algebraic model; optimistic synchronization; linear complexity.
\end{Keywords}

\begin{Keywords}{MSC classes} %
08A02,
08A70, %
68M07, %
68P05. %
\end{Keywords}

\begin{Keywords}{ACM classes} %
D.4.3, %
E.5, %
F.2.2, %
G.2. %
\end{Keywords}
\end{abstract}


\section{Introduction and related works}

Synchronizing diverged copies of some data stored on a variety of devices
and/or at different locations is an ubiquitous task. The last two decades
saw a proliferation of practical and theoretical works addressing this
problem.
According to \cite{techradar23}, ``file synchronization is a feature usually
included with backup software in order to make is easier to manage and
recover data as and when required.'' File synchronization usually delivered
through cloud services. Dedicated file synchronizing
solutions frequently come with additional tools not just for managing the saved
data, but also to allow for file sharing and collaboration with stored files
and documents.

These cloud storage services are easily accessible for the end-user because
the service front-ends are very well integrated into web clients as well as
desktop and mobile environments. Simple user interfaces hide the complex and
sophisticated service back-ends \cite{MOS2018}. Collaboration services are
frequently integrated into the ``cloud storage'' environment. For example,
Google Docs is an application layer integrated into Google Drive storage,
Office 365 is integrated with One Drive storage and Dropbox Paper service is
an extension of Dropbox storage.

To address the emerging challenges in the more specific fields of
distributed data storage and collaborative editors two competing theoretical
frameworks have emerged: Operational Transformation (OT) and Conflict-free
(or Commutative) Replicated Data Types (CRDT). OT appeared in the seminal
work of \cite{SJZ98}, and was refined later, among others, in
\cite{SE98,SLL11,NS16}. The main applications are collaborative editors, the
most notable example being Google Docs \cite{DR10}. CRDT emerged as an
alternative with a stronger theoretical background, see
\cite{CRDT-orig,CRDT-overview}. Both OT and CRDT have been applied
successfully in a variety of synchronization tasks, including file
synchronizers \cite{NSC16,TSR15,LI21}. Finding a provably correct,
subquadratic synchronization algorithm, however, has remained one of the
main open problems both in OT and CRDT \cite{syncpal-thesis}.

The Algebraic Theory of Filesystems \cite{Csi16,CC22a} introduced algebraic
manipulations of filesystem commands, and provided the foundations for
automated checking of certain filesystem properties. It is reminiscent of
both OT \cite{NS16} and CRDT inasmuch as instead of pure traditional
filesystem commands it uses operations enriched with contextual information.
While these operations are not fully commutative -- as would be requested by
CRDT \cite{SPMB11} --, the non-commutative parts can be isolated
systematically and handled separately. In \cite{CC22a} this framework has
been used successfully to create a provably correct theoretical filesystem
synchronizer for two replicas together with a complete analysis of all
possible synchronized states. The present work extends \cite{CC22a}
significantly by
\begin{itemz}
\item[--] providing the theoretical foundation for synchronizing an
\emph{arbitrary number} of replicas;

\item[--] developing, for the first time, a provably correct synchronization
algorithm which works in \emph{linear time} after an initial sort
thus in subquadratic total running time;

\item[--] allowing \emph{asynchronous usage}, namely, after requesting
synchronization the local replicas need not be locked;

\item[--] allowing for \emph{late comers}, when a replica can be upgraded to the
synchronized state
without providing the local changes;

\item[--] generalizing the traditional tree-like filesystem skeleton to
arbitrary \emph{acyclic graphs}, thus extending the applicability of
the synchronization algorithm.
\end{itemz}
In this paper we will use the term \emph{near linear} to mean ``linear up to
a logarithmic factor.'' Thus sorting requires near linear time
\cite{knuth97}, and the total time required by the above synchronization
algorithm is also near linear.

\begin{figure}[htb]\centering
\begin{tikzpicture}[decoration={zigzag,amplitude=1.5pt,segment length=1.2mm,pre=lineto,pre length=3pt}]
\clip (-1.5,-1.24) rectangle (5.7,2.54);
\fill[fill=blue!8] (-0.55,-0.3) rectangle (-0.05,1.9);
\fill[fill=blue!8] (2.95,-0.3) rectangle (3.45,1.9);
\fill[fill=red!8] (0.9,-0.3) rectangle (1.4,1.9);
\fill[fill=red!8] (4.38,-0.3) rectangle (4.90,1.9);
\foreach\y/\z in {0/{\FS_1},0.8/{\FS_2},1.6/{\FS_3}}{
\draw[<-,decorate] (0.92,\y) -- (-0.125,\y);
\draw (-0.3,\y) node{$\FS$}; \draw (1.166,\y-0.02) node {$\z$};
\draw[dotted] (1.45,\y) -- (2.7,\y); 
\draw (2.85,\y-0.03) node {$\rightsquigarrow$};
\draw (3.22,\y) node{$\FSa$};
{\let\FS\FSa
\draw[<-,decorate] (4.42,\y) -- (3.38,\y);\draw (4.65,\y-0.02) node {$\z$};
}
\draw[dotted] (4.9,\y) -- (5.7,\y);
}

\draw (-1.1,0) node {\footnotesize copy1};
\draw (-1.1,0.8) node {\footnotesize copy2};
\draw (-1.1,1.6) node {\footnotesize copy3};

\draw (-0.5,2.2) node{\footnotesize synchronized};
\draw[blue,->] (-0.8,2.05) to[out=270,in=160] (-0.3,1.3);
\draw (4.0,2.3) node{\footnotesize synchronized};
\draw[blue,->] (3.7,2.15) to[out=270,in=20] (3.3,1.3);
\draw (1.2,2.4) node{\footnotesize local\,changes};
\draw[->] (0.68,2.25) to[out=250,in=90] (0.5,1.7);
\draw[->] (0.85,2.25) to[out=245,in=90] (0.5,0.9);
\draw[->] (1.05,2.25) to[out=240,in=90] (0.6,0.1);
\draw (2.5,2.1) node{\footnotesize diverged};
\draw[red!50!black,->] (2.4,1.95) to[out=230,in=15] (1.25,1.3);

\foreach\y in {-0.2}{
\draw (1.25,\y-0.63) rectangle (2.85,\y-1.03);
\draw (2.05,\y-0.85) node {\footnotesize algorithm};
\draw[red!50!black,->] (1.4,1.6) to[out=-45,in=90] (1.9,\y-0.6);
\draw[red!50!black,->] (1.4,0.8) to[out=-40,in=90] (1.82,\y-0.6);
\draw[red!50!black,->] (1.4,0.0) to[out=-30,in=90] (1.75,\y-0.6);

\draw[blue,<-] (2.8,1.6-0.05) to[out=225,in=90] (2.22,\y-0.6);
\draw[blue,<-] (2.8,0.8-0.05) to[out=220,in=90] (2.3,\y-0.6);
\draw[blue,<-] (2.8,0.0-0.05) to[out=210,in=90] (2.38,\y-0.6);

\draw (1.1,\y-0.48) node {\footnotesize requests};
\draw (3.1,\y-0.44) node {\footnotesize commands};
\draw[red!50!black,->] (4.9,1.6) to[out=-45,in=90] (3.5+1.9,\y-0.6);
\draw[red!50!black,->] (4.9,0.8) to[out=-40,in=90] (3.5+1.82,\y-0.6);
\draw[red!50!black,->] (4.9,0.0) to[out=-30,in=90] (3.5+1.75,\y-0.6);
\draw (3.5+1.25,\y-0.63) rectangle (3.5+2.85,\y-1.03);
\draw (3.5+2.05,\y-0.85) node {\footnotesize algorithm};
}

\end{tikzpicture}

\caption{\fontsize{9}{10}\selectfont
The synchronization cycle. Identical copies of the same filesystem
are edited independently. Each replica sends the locally created update
information to the synchronizer, which returns the commands to be executed
on the local copy to update it to a common synchronized state.
}\label{fig:sync1}

\end{figure}
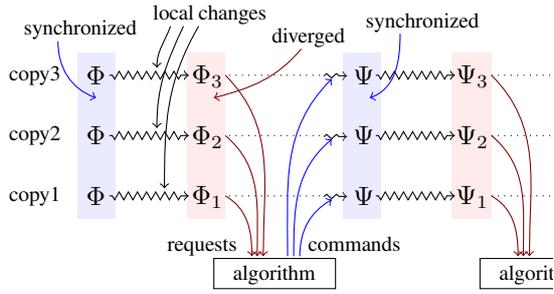

This paper follows the traditional paradigm of filesystem synchronization
described in e.g. \cite{BP98} and illustrated in
Figure \ref{fig:sync1}. The starting point is a set of identical copies of the same filesystem,
possibly stored
at different locations,
on different hardware and architectures (cloud servers, mobile devices, laptop and desktop
computers with various operating systems)
or using different software implementations (e.g. ext4, btrfs, ZFS, NTFS, APFS,
or database file systems).
Each of these replicas is edited (modified) locally. At a
certain time the diverged copies call for \emph{synchronization} by sending
a description of the diverged state to a central server. After
receiving the requests, the server computes filesystem commands
which transform each replica into a common synchronized state, and
sends them back to the replicas. The replicas execute the received
commands on their local copy transforming all diverged copies into a
new identical synchronized state. At that point the synchronization cycle
can start again.

Synchronizers typically require locking the replicas during the whole
synchronization process, meaning that no modifications are allowed after the
synchronization request is sent. (The locked time period is
indicated by the dotted lines on Figure \ref{fig:sync1}.)
\emph{Asynchronous}, or \emph{optimistic} synchronization allows additional
local modifications after the synchronization request is sent as
depicted on the top of Figure \ref{fig:sync2}. When the synchronization
commands arrive from the server, those commands are modified to reflect the
additional changes, and then applied to the replica. The result should
be the same as when performing synchronization without the additional changes, and
then applying them to the synchronized filesystem afterwards---as indicated at the bottom
of the figure.

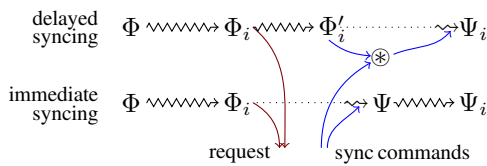
\begin{figure}[htb]\centering
\begin{tikzpicture}[decoration={zigzag,amplitude=1.5pt,segment length=1.2mm,pre=lineto,pre length=3pt}]
\foreach\y in {0,1.0}{
\draw (-0.2,\y) node{$\FS$};
\draw (1.2,\y-0.02) node {$\FS_i$};
\draw (4.3,\y-0.02) node {$\FSa_i$};
\draw[<-,decorate] (0.95,\y) -- (0.0,\y);
}

\draw (-0.5,1.1) node[left] {\footnotesize delayed}
      (-0.5,0.85) node[left] {\footnotesize syncing}
      (-0.5,0.12) node[left] {\footnotesize immediate}
      (-0.5,-0.16) node[left] {\footnotesize syncing};

\draw[red!50!black,->] (1.4,1.0) to[out=-40,in=90] (1.82,-0.6);
\draw[red!50!black,->] (1.4,0.0) to[out=-30,in=90] (1.75,-0.6);

\draw[<-,decorate] (2.2,1.0) -- (1.4,1.0);
\draw (2.45,1.0-0.02) node {$\FS'_i$};

\draw[dotted] (1.45,0)--(2.65,0);
\draw (2.76,-0.03) node {$\rightsquigarrow$};
\draw[dotted] (2.55,1.0)--(3.8,1.0);
\draw (3.950,1.0-0.03) node {$\rightsquigarrow$};
\draw (3.1,0) node {$\FSa$};
\draw[<-,decorate] (4.05,0) -- (3.25,0);

\draw (3.08,0.6) node {$\oast$};
\draw[blue,<-] (2.95,0.53) to[out=210,in=90] (2.3,-0.6);
\draw[blue,<-] (4.0,1.0-0.05) to[out=230,in=23] (3.25,0.64);
\draw[blue,<-] (2.95,0.62) to[out=175,in=-40] (2.4,0.83);
\draw[blue,<-] (2.8,0-0.05)   to[out=225,in=90] (2.38,-0.6);

\foreach\y in{-0.2}{
\draw (1.23,\y-0.48) node {\footnotesize request};
\draw (3.37,\y-0.48) node {\footnotesize sync\,commands};
}
\end{tikzpicture}

\caption{\fontsize{9}{10}\selectfont
Asynchronous synchronization. Top line: after the synchronization
request has been sent, additional local modifications are made to the
filesystem. When receiving the synchronization commands, they are
modified using the current state of the local filesystem. 
Bottom line: the end result should be the
same as applying the synchronization commands immediately and then making
the local modifications afterwards.
}\label{fig:sync2}

\end{figure}

The main focus of this paper is filesystem synchronization, or the
synchronization of data stored in the nodes of a tree or directed acyclic
graph. The stored data is considered to be an indivisible unit, and the task
of consolidating different versions of the same data is not considered. It
should be solved by other methods specifically tailored to this task.

Some practical aspects of data synchronization are not touched and are out
of the scope of this paper. Managing user access and permissions, when and
how to allow file sharing and collaboration are especially important due to
security considerations \cite{techradar23}. Additionally, file
synchronization should use very strict security protocols to ensure that
data is safely protected and secured at all times and to make data leaks and
malicious access less likely.

Synchronizers should also minimize network traffic. Unlike the popular Rsync
utility \cite{rsync} available for comparing and synchronizing files, there
is currently no such a ``middleware'' utility for general data sets
\cite{BOS2023}. Hopefully our work is a small step in that direction.

The rest of this paper is organized as follows. Section \ref{sec:defs}
recalls the building blocks of the Algebraic Theory of Filesystems including
the filesystem model, the augmented filesystem commands and their basic
properties. This section does not contain new results and its purpose is to
give the reader a comprehensive summary of the topic on which the rest of
the paper relies. For more intuition, explanation and examples please
consult \cite{CC22a}. Section \ref{sec:overview} is a high-level overview of
how filesystem synchronization can be handled in the algebraic framework.
This section \emph{defines} what constitutes a synchronized state of several
diverged replicas, rather than providing a method of creating it. The
definition automatically guarantees many desired and required properties of
the merged filesystem, an indication of the strength and adequacy of the
algebraic framework. It is discussed how all synchronized states can be
achieved by conflict resolution, paving the way towards the near linear
synchronization algorithm discussed in Sections \ref{sec:algorithms} and
\ref{sec:all-mergers}. Asynchronous (optimistic) synchronization is
discussed at the end of Section \ref{sec:overview}.

The base algorithms of the synchronization suite are discussed in Section
\ref{sec:algorithms} including the one which generates the command sequence
(called \emph{merger}) which produces the merged state in subquadratic time.
Supporting theoretical results are collected and proved in Section
\ref{sec:theory}. Section \ref{sec:all-mergers} discusses how all possible
synchronized states can be generated by a nondeterministic algorithm running
in linear time. Section \ref{sec:results} presents some empirical results
justifying the claims about the running time of the algorithms. Finally,
Section \ref{sec:conclusion} concludes the paper with some extensions and
open problems. Most notably, our algorithms, with some modification, work
not only on the tree-like filesystem skeletons as stipulated by the
algebraic theory, but also on filesystems based on directed acyclic graphs.

This work focuses mainly on algorithmic aspects so many theoretical
justifications are deliberately phrased in general terms. Rigorous proofs
would require substantially more space and, in our opinion, would not
provide additional insight.
A proof-of-concept implementation of the algorithms presented in this paper
in Python can be found at
\url{https://github.com/csirmaz/algebraic-reconciler}.

\section{Definitions}\label{sec:defs}

This section recalls the notions and basic results of the Algebraic Theory
of Filesystems \cite{Csi16} with some illustration of the concepts. The main
ingredient is a highly symmetric set of filesystem commands which are
enriched with contextual information. Devising such a command set which is
amenable to algebraic manipulation was one of the main contributions of
\cite{Csi16}. For more intuition and explanation on this see \cite{CC22a}.

\subsection{Filesystems}\label{subsec:filesystem}

The filesystem model reflects the most important high-level
aspects of real-word filesystems. It is a mixture of
identity- and path-based models \cite{syncpal-thesis,TSR15}. The contents of
the filesystem are stored at \emph{nodes} which are identified, or labeled,
by a set of fixed and predetermined paths. The collection of all (virtually)
available nodes is fixed in advance
(while in a real-world filesystem only a
restricted subset of those paths is present).
No path operations are
considered; in particular, our model does not support the creation or
deletion of
\emph{links}. In the basic case no links are allowed at all, thus the
namespace---the set of available nodes or paths---forms a collection of
rooted trees. An actual filesystem populates this fixed namespace with
values. If $\FS$ is a filesystem, the value stored at node $n$ is
denoted by $\FS(n)$. Valid filesystems are required to have the
\emph{tree-property} all the time, meaning that along any branch starting
from a root node there must be zero or more \emph{directories}, zero or one
\emph{file}, followed by \emph{empty} nodes only. If $\FS$ does not have the
tree-property, then we say that $\FS$ is \emph{broken}. 

Formally, the paths form a forest-like \emph{namespace:} a set $\N$ endowed
with the partial function $\parent:\N\to\N$ returning the parent of every
non-root node, while this function is not defined on roots. If $n=\parent m$
then $n$ is the parent of $m$, and $m$ is a child of $n$. For two nodes
$n,m\in\N$ we say that $n$ is above $m$, or $n$ is an ancestor of $m$, and
write $n\prec m$, if $n=\parent^i m$ for some $i\ge 1$. As usual, $n\preceq
m$ denotes $n\prec m$ or $n=m$. As the parent function $\parent$ induces a
tree-like structure on $\N$, the relation $\preceq$ is a partial order. Two
nodes $n,m\in \N$ are \emph{comparable} if either $n\preceq m$ or $m\preceq
n$, and they are \emph{uncomparable} or \emph{independent} otherwise.

In practice, nodes of the filesystem are labeled by \emph{complete paths},
where directory names are separated by the slash character. Thus the root
has the label \textsf{/}, nodes \textsf{/a}, \textsf{/xxx} are on the first
level just under the root, and \textsf{/a/b/cc/d} is (the label of) a node
on level four whose ancestors are \textsf{/a/b/cc}, \textsf{/a/b},
\textsf{/a}, and \textsf{/}. The implementation of the algorithms also
follows this convention.

As indicated above, the value stored by a filesystem at a node can be a
\emph{directory}, can be \emph{empty} or can be a \emph{file}. The type of
this value is denoted by $\D$, $\E$ and $\F$, respectively, corresponding to
these possibilities. With an abuse of notation we also use $\D$ for the
directory value, and $\E$ for the empty value (where $\E$ means ``no
value'', not to be confused with a file which has no content). When the node
value is a file, then the node stores the complete file content (including
the possibility that this content is empty). While this filesystem model
allows only one directory value and only one empty value (but see the
discussion in Section \ref{sec:conclusion} on relaxing this limitation),
there are many different possible file values of type $\F$ representing
different file contents. The value types are \emph{ordered} with $\E$ being
the lowest, and $\D$ being the highest, written as $\E < \F < \D$. The type
of the filesystem value $x$ is denoted by $\type(x)$.

\subsection{Filesystem commands}

Real-life filesystems are usually manipulated by commands like creating or
deleting files and directories, modifying (editing, appending to) a file, or
moving an existing file or directory to another location. Our model contains
similar commands, but with some modifications; the first of which is that we
only consider commands which affect the filesystem at a single node. Thus a
\emph{move} command should be represented as a sequence of a \emph{delete}
and a \emph{create}.

Second, commands in our model include the complete new value to be stored in
the filesystem. Even if file contents are just partially modified or are
appended to, the full new value must be supplied. This allows our model to
use an unified representation of all single-node commands, as they can be
fully specified by the node (path) at which the command acts and the new
value (including a directory and empty value) to be stored there.

Third, as was observed in \cite{Csi16}, enriching filesystem commands with
additional contextual information--- in this case, the previous content at
the affected node---makes them amenable to algebraic manipulation.

\begin{definition}[Filesystem commands]\label{def:commands}
A filesystem command is a triplet $\sigma=\(n,x,y)$, where $n\in \N$ is the
node on which $\sigma$ acts, $x$ is the content at node $n$ before
$\sigma$ is executed (the contextual information, precondition), and $y$ is
the new content.
\end{definition}

It is clear that every real-life filesystem command acting on a single node
can be easily (and automatically) transformed into this internal
representation. For example, \emph{rmdir}$(n)$ corresponds to $\(n,\D,\E)$,
which replaces the directory value at $n$ by the empty value. The command
$\(n,\E,\D)$ creates a directory at $n$, but only if the node $n$ has no
content, that is, there is no directory or file at $n$ (a usual requirement
when creating a directory). This command reflects the usual behavior or
\emph{mkdir}. For files $\f_1$ and $\f_2\in\F$ the command $\(n,\f_1,\f_2)$
replaces $\f_1$ stored at $n$ by the new content $\f_2$. This latter command
can be considered to be an equivalent of \emph{edit}$(n,\f_2)$.

As an example, creating a copy of the file \textsf{/home/user/text} in the
same directory under the name ``\textsf{copy}'' and then deleting the
original file is represented by the sequence of commands
$$
\(\textsf{/home/user/copy},\E,\f_o)  \mbox{~~and~~}
\(\textsf{/home/user/text},\f_o,\E),
$$
where $\f_o$ is the file content at the ``\textsf{text}'' node.

Applying the command $\sigma$ to a filesystem $\FS$ is written as the left
action $\sigma\FS$. The command $\sigma=\(n,x,y)$ is \emph{applicable} to
$\FS$ if $\FS$ contains $x$ at node $n$, that is, $\FS(n)=x$ (the
precondition holds), and after changing the content at $n$ to $y$ the
filesystem still has the tree property. If $\sigma$ is not applicable to
$\FS$, then we say that $\sigma$ \emph{breaks the filesystem}. If $\sigma$
does not break $\FS$, then $\sigma$ is applicable to $\FS$, and its
execution changes $\FS$ at node $n$ only.

Command sequences are applied from left to right, thus $(\sigma\alpha)\FS =
\alpha(\sigma\FS)$, where $\alpha$ is a command sequence. Composition of
sequences is written as the concatenation $\alpha\beta$, but occasionally we
write $\alpha\circ\beta$ to emphasize that $\beta$ is to be executed after
$\alpha$. A sequence breaks a filesystem if one of its commands breaks the
filesystem when it was to be applied. The sequence $\alpha$ is
\emph{non-breaking} if there is at least one filesystem $\alpha$ does not
break; otherwise it is \emph{breaking}.

Sequences $\alpha$ and $\beta$ are \emph{semantically equivalent}, written
as $\alpha\equiv\beta$, if they have the same effect on all filesystems,
that is, $\alpha\FS=\beta\FS$ for all $\FS$. We write $\alpha\semle\beta$ to
denote that $\beta$ semantically extends $\alpha$, that is,
$\alpha\FS=\beta\FS$ for all filesystems that $\alpha$ does not break.

For example, the sequence which creates a file at some node $n$ and then
changes this file to a directory is equivalent to the single command which
creates the diretory directly:
$$
  \(n,\E,\f) \circ \(n,\f,\D) \equiv \(n,\E,\D),
$$
while creating the file and deleting it immediately is  semantically
strictly weaker than applying the ``null'' command $\(n,\E,\E)$, thus we
only have
$$
   \(n,\E,\f) \circ \(n,\f,\E) \semle \(n,\E,\E).
$$
This is because the right hand side is applicable when the parent of $n$
contains a file, while the left hand side would break such a filesystem.

The \emph{inverse} of $\sigma=\(n,x,y)$ is $\sigma^{-1}=\(n,y,x)$. For a
sequence $\alpha$ its inverse $\alpha^{-1}$ consists of the inverses of the
commands in $\alpha$ in reverse order. The inverse has the expected
property: if $\alpha$ does not break $\FS$, then $(\alpha^{-1}\alpha)\FS =
\FS$, that is, $\alpha^{-1}$ rolls back the effects of $\alpha$. Observe
that $\alpha$ is non-breaking if and only if so is $\alpha^{-1}$.

\subsection{Command types, execution order}

The input and output \emph{values} of $\sigma=\(n,x,y)$ are $x$ and $y$,
respectively, while the input and output \emph{types} are $\type(x)$ and
$\type(y)$. Commands are classified by their input and output types using
\emph{patterns}. The command $\(n,x,y)$ matches the pattern $\(n,\mathcal
P_x,\mathcal P_y)$ if $\type(x)$ is listed in $\mathcal P_x$, and $\type(y)$
is listed in $\mathcal P_y$. In a pattern the symbol $\bullet$ matches any
value. As an example, every command matches $\(\bullet,\E\F\D,\D\F\E)$.

Commands with identical input and output values are \emph{null commands}.
Null commands do not change the filesystem (but can break it if the
precondition does not hold). \emph{Structural commands} change the type of
the stored data. Structural commands are further split into
\emph{constructors} and \emph{destructors}. A constructor increases the type
of the stored value, while a destructor decreases it. Thus a constructor
matches either $\(\bullet,\E,\F\D)$ or $\(\bullet,\F,\D)$, and a destructor
matches either $\(\bullet,\D\F,\E)$ or $\(\bullet,\D,\F)$. Observe that
$\sigma$ is a constructor if and only if $\sigma^{-1}$ is a destructor.
Finally, non-null commands matching $\(\bullet,\F,\F)$ are \emph{edits}.

The binary relation $\sigma\orel\tau$ between commands on parent--child
nodes captures the notion that $\sigma$ must precede $\tau$ in the execution
order.

\begin{definition}[$\orel$ relation, $\orel$-chain]\label{def:exec-order}
The relation $\sigma\orel\tau$ holds if the pair matches either $\(n, \D\F,
\E)\orel(\parent n,\D,\F\E)$, or matches $\(\parent n,\E\F, \D) \orel \(n,
\E,\F\D)$. An $\orel$-chain is a sequence of $\orel$-related commands
connecting its first and last element.
\end{definition}

The first case in Definition \ref{def:exec-order} corresponds to the
requirement that before deleting a directory its descendants should be
deleted. The second case says that a file or directory can only be created
under an existing directory. Observe that $\sigma\orel\tau$ if and only if
$\tau^{-1}\orel\sigma^{-1}$, and in this case either both $\sigma$ and
$\tau$ are constructors, or both are destructors.

\subsection{Canonical sets and sequences}

Commutativity is a core concept in command-based synchronization
\cite{BP98}, where, in fact, the task is to determine in what order (and
which) modifications made to other replicas can be applied to a particular
replica. If two modifications or commands commute, that is, their result
does not depend on the order in which they are applied, then they do not
represent conflicting updates to the data, as they can be seen as
independent. Unsurprisingly, then, commutativity plays a central role in
CRDT (see \cite{CRDT-orig,CRDT-overview}), where basic data types with
special operators are devised so that executing the operators in different
orders yields the same results.

While not all filesystem commands commute, non-commutative pairs can be
isolated systematically. If $\sigma$ and $\tau$ are on different nodes which
are not in parent--child relation, then they commute ($\sigma\tau$ and
$\tau\sigma$ are semantically equivalent). An example of this is creating a
file in some directory and editing another existing file. The affected nodes
where the changes are made are independent. If $\sigma$, $\tau$ are non-null
commands on parent--child nodes, then either $\sigma\tau$ breaks every
filesystem, or, necessarily, $\sigma\orel\tau$. If a command $\sigma$ on
node \textsf{/a/b/c} is followed immediately by another command $\tau$ on
the node \textsf{/a/b/c/d} successfully, then $\sigma$ must create a
directory at \textsf{/a/b/c} (which location previously was either empty, or
contained a file), thus \textsf{/a/b/c/d} was empty, so $\tau$ must create
either a file or a directory there. Consequently we have $\sigma\orel\tau$
by Definition \ref{def:exec-order}. Consecutive commands on the same node
either break every filesystem (if the second command requires a different
value than the output of the first command), or can be replaced by a single
command (while extending the semantics). These easy facts imply some strong
and intricate structural properties of non-breaking command sequences.
Exploring and using these properties made possible the complete and thorough
investigation of the synchronization process of two diverged replicas in
\cite{CC22a}, as well as devising the first provably correct subquadratic
synchronization algorithm in this paper.

Intuitively, a canonical sequence is just the ``clean'' version of a
non-breaking command sequence. An important property of canonical sequences
is that their semantics is determined uniquely by the set of commands they
contain, see Theorem \ref{thm:basic}. Command sets that can be arranged into
canonical sequences are also called canonical. For the formal definitions we
need two more notions. A command sequence $\alpha$ \emph{honors $\orel$}, if
for any two commands $\sigma,\tau\in\alpha$, $\sigma$ precedes $\tau$ in the
sequence whenever $\sigma\orel\tau$. The command set $A$ is
\emph{$\orel$-connected} if for any two commands $\sigma,\tau\in A$, if
$\sigma$ and $\tau$ are on different comparable nodes, then they are
connected by an $\orel$-chain (Definition \ref{def:exec-order}) consisting
of commands in $A$. In particular, if $A$ is $\orel$-connected and $\sigma,
\tau\in A$ are on the comparable nodes $n$ and $m$, then $A$ has commands on
each node between $n$ and $m$.

\begin{definition}[Canonical sets and sequences]\label{def:canonical}
The command set $A$ is canonical, if the following three conditions hold:
\begin{itemz}[1pt]
\item $A$ does not contain null-commands;
\item $A$ contains at most one command on each node; and
\item $A$ is $\orel$-connected.
\end{itemz}
The command sequence $\alpha$ is a canonical if the commands in $\alpha$
form a canonical set and, additionally, $\alpha$ honors $\orel$.
\end{definition}

For example, the set consisting of the three commands
$\(\textsf{/a/b},\D,\f_5)$, $\(\textsf{/a/c},\f_o,\E)$, and
$\(\textsf{/a/c},\f_5,\f_5)$ is not canonical for two reasons: the third
command is a null command, and the second and the third commands are on the
same node. Neither is canonical the command set
$$
\{\,
\(\textsf{/a/b/c/d},\f_s,\E),
\(\textsf{/a},\D,\f_s),
\(\textsf{/a/b},\D,\E) \,\}
$$
as its first and last elements are not $\orel$-connected. (The second and
third elements are $\orel$-connected.) Adding the command
$\(\textsf{/a/b/c},\D,\E)$ to this set makes it canonical.

\begin{theorem}[E.P.~Csirmaz, \cite{Csi16}]\correcthref\label{thm:basic}

\begin{itemz}[0pt]
\item[]
\item[a)] If two canonical sequences share the same command set, then they
are semantically equivalent. Actually, they can be transformed into each
other using commutativity rules.

\item[b)] Canonical sequences are non-breaking.

\item[c)] Every non-breaking sequence $\alpha$ can be transformed into a
canonical sequence $\alpha^*\semge\alpha$ (that is, $\alpha$ and $\alpha^*$
have the same effect on filesystems that $\alpha$ does not break, but
$\alpha^*$ might work on more filesystems).

\item[d)] Canonical sets can be ordered to honor $\orel$, that is, to 
become canonical sequences.
\qed
\end{itemz}
\end{theorem}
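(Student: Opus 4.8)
The plan is to treat parts (d) and (a) as a combinatorial core that uses only the two facts already stated in the text --- commands on nodes not in parent--child relation commute, and two commands of a canonical set on comparable nodes are joined by an $\orel$-chain --- while parts (b) and (c) are the genuinely ``filesystem-level'' statements. I would prove them in the order (d), (b), (a), (c).

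\emph{Part (d) and acyclicity of $\orel$.} The crux is that $\orel$-chains are monotone in tree depth. If $\sigma\orel\tau$ then, as noted after Definition~\ref{def:exec-order}, $\sigma$ and $\tau$ are either both constructors, in which case $\sigma$ is on the parent and $\tau$ on the child, or both destructors, in which case $\sigma$ is on the child and $\tau$ on the parent. Along a chain $\sigma_1\orel\sigma_2\orel\cdots$ no $\sigma_i$ can be simultaneously a constructor and a destructor, so the whole chain consists of constructors (the underlying nodes then descend, depth strictly increasing) or of destructors (depth strictly decreasing). Hence $\orel$ has no directed cycle; on a canonical set $A$ its transitive closure $\orel^{+}$ is therefore a strict partial order, any linear extension of which is a sequence whose command set is $A$ and which honours $\orel$, i.e. a canonical sequence. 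Since finite strict partial orders admit linear extensions, (d) follows.

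\emph{Parts (b) and (a).} For (b) I would, given a canonical sequence $\alpha$ with command set $A$, exhibit one filesystem it does not break. Put $\FS(n)=x$ whenever $\(n,x,y)\in A$; for an uncommanded $n$ set $\FS(n)=\D$ if $n$ lies above a commanded node and $\FS(n)=\E$ otherwise. Using $\orel$-connectedness together with the depth-monotonicity above, one checks that on every root-to-leaf branch the commanded nodes form a contiguous block whose commands are all constructors, or all destructors, and hence that $\FS$ has the tree-property. Executing the commands of $A$ in any $\orel$-honouring order --- in particular in the order given by $\alpha$ --- then breaks nothing: preconditions hold because distinct commands act on distinct nodes, so each command sees precisely its recorded precondition; and the tree-property survives because honouring $\orel$ forces a directory to be created before anything is created beneath it, and everything beneath a directory to be deleted before the directory. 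The verification splits into the constructor-block and destructor-block cases and is routine but requires care. For (a), let $\alpha,\beta$ be $\orel$-honouring orderings of the same canonical set $A$. If $\sigma,\tau\in A$ sit on comparable nodes then, by $\orel$-connectedness and acyclicity, exactly one of $\sigma\orel^{+}\tau$, $\tau\orel^{+}\sigma$ holds, so $\sigma$ and $\tau$ occur in the same relative order in $\alpha$ and in $\beta$; if they sit on independent nodes they commute. Thus $\alpha$ and $\beta$ are two linear extensions of $\orel^{+}$, and any two linear extensions of a finite poset are joined by a chain of transpositions of adjacent incomparable elements; here such a transposition swaps adjacent commands on independent nodes and so preserves semantics. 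This proves both assertions of (a); by (b) the two sequences are non-breaking, so ``the same effect on all filesystems'' is the substantive reading.

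\emph{Part (c) and the main obstacle.} Given non-breaking $\alpha$, I would normalise it with three semantics-non-decreasing rewrites: replace consecutive co-located commands $\(n,x,y)\(n,y,z)$ by $\(n,x,z)$ --- valid because on any filesystem $\alpha$ does not break the first command outputs $y$ at $n$, so the second really has precondition $y$, and $\(n,x,y)\(n,y,z)\semle\(n,x,z)$; delete any null command, since $\beta\(n,x,x)\gamma\semle\beta\gamma$; and swap two adjacent commands on independent nodes. The first two rules shorten the sequence, and the third is used only to bring a pair of co-located commands together so that the first rule applies next, so the process terminates at some $\alpha^{*}\semge\alpha$; hence $\alpha^{*}$ is non-breaking, has no null command, and has no two consecutive commands on one node. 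One then argues $\alpha^{*}$ has at most one command per node and is $\orel$-connected: a repeated node, or a gap between two commanded comparable nodes, would --- because two non-null commands on parent--child nodes either satisfy $\orel$ or break every filesystem --- force $\alpha^{*}$ to break every filesystem, contradicting non-breakingness; and $\alpha^{*}$ honours $\orel$ because any non-breaking ordering of a canonical set must (otherwise a creation or a deletion fails for want of a parent directory). So $\alpha^{*}$ is a canonical sequence with $\alpha^{*}\semge\alpha$. I expect this last step --- upgrading ``$\alpha^{*}$ cannot be rewritten further'' to ``$\alpha^{*}$ has no repeated node and is $\orel$-connected'' --- to be the main obstacle: it is exactly where the structural lemmas on non-breaking parent--child pairs must be invoked, and a clean proof seems to need a short induction on the length of the offending segment.
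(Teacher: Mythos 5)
The paper does not prove this theorem: it is imported from~\cite{Csi16} and stated with a terminal \verb|\qed| but no argument, so there is nothing in the paper to compare your proposal against. Evaluating your sketch on its own merits: the core combinatorial observation you use for (d) and (a) --- that $\orel$-chains are monotone in tree depth (a chain is either all constructors descending, or all destructors ascending), hence $\orel$ restricted to a canonical set is acyclic, hence its transitive closure is a finite strict partial order, any linear extension of which is a canonical sequence and any two of which are joined by adjacent transpositions of poset-incomparable, i.e.\ node-uncomparable, i.e.\ commuting commands --- is correct and is essentially the standard route to these two items. For (b), the explicit witness filesystem (precondition at commanded nodes, $\D$ at uncommanded ancestors of commanded nodes, $\E$ elsewhere) does work; the verification leans on $\orel$-connectedness to ensure that commanded nodes on a branch form a contiguous block with a single constructor-block or destructor-block profile (and that any edit command is isolated from all other commanded nodes), and on $\orel$-honouring to keep the tree property at every intermediate step, exactly as you indicate.

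The genuine gap is in (c), and you have located it accurately. Your rewrite system (merge adjacent co-located commands, drop null commands, swap adjacent commands on independent nodes) does not obviously terminate in a canonical sequence, because the swap rule cannot move a command past one on a comparable node, so $\alpha^*$ could \emph{a priori} retain two co-located commands separated by commands on ancestors or descendants. Closing this requires the structural lemma that in a non-breaking sequence two consecutive non-null commands on parent--child nodes must be $\orel$-related (the paper recalls this as a fact; it is \emph{not} one of the stated rewrite rules), from which one shows by induction on the separating segment that two co-located commands in $\alpha^*$ would force a depth-monotone $\orel$-chain returning to its starting node, which is impossible. Without spelling out that induction and the companion argument that the reduced command set is $\orel$-connected and that its sequence order already honours $\orel$, the proof of (c) is incomplete; this is precisely the technical weight the citation to~\cite{Csi16} is carrying.
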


As an example, the only order honoring the $\orel$ relation of the canonical
set
\begin{equation}\label{eq:example}
E = \big\{\,
\(\textsf{/a/b/c/d},\f_s,\E),
\(\textsf{/a},\D,\f_s),
\(\textsf{/a/b},\D,\E),
\(\textsf{/a/b/c},\D,\E) \,\big\}
\end{equation}
is the one which executes them ``bottom up'' starting with the command on
\textsf{/a/b/c/d} and ending with the command on node \textsf{/a}.

Algorithm \ref{alg:canonical} in Section \ref{sec:algorithms} checks, in
near linear time, whether a command set $A$ is canonical. Algorithm
\ref{alg:ordering} arranges a canonical set into a canonical sequence.

By virtue of Theorem \ref{thm:basic}\,\emph{a)} and \emph{d)}, the semantics
of canonical sequences is determined by the unordered set of their commands,
even if their order cannot be recovered uniquely. (This happens when the set
contains commands on uncomparable nodes.) Thus when only the semantics is
concerned, a canonical sequence can, and will, be replaced by the set of its
commands. For example, when we write $A\FS$ where $A$ is a canonical command
set, we mean that commands in the set $A$ should be applied in some (or any)
$\orel$-honoring order to the filesystem $\FS$. Similarly, $A\circ B$ means
that first the commands in $A$ are applied in some $\orel$-honoring order,
followed by the commands in $B$, again in some $\orel$-honoring order.

The property that commands in a canonical set can be executed in different
orders while preserving its semantics is a variant of the
\emph{commutativity principle} of CRDT \cite{PMS09}. Definition
\ref{def:iseg} below discusses a special case of the reordering when a
subset of the commands is to be moved to the beginning of the execution
line.

\begin{definition}[Initial segment]\label{def:iseg}

For a canonical set $A$ we write $B\subc A$, and say that $B$ is an
\emph{initial segment of $A$}, to indicate that $B$ is not only a subset of
$A$, but can also be moved to the beginning of an ordering of $A$ while
keeping the semantics. In other words, $A\equiv B\circ (A\setm B)$.
\end{definition}

We remark that if $B$ is an initial segment of $A$, then both $B$ and
$A\setm B$ are canonical, see \cite[Proposition 5]{CC22a}. For example, the
canonical set $E$ in (\ref{eq:example}) has three proper initial segments:
\begin{itemz}
\item[] $\{\(\textsf{/a/b/c/d},\f_s,\E),
\(\textsf{/a/b},\D,\E),
\(\textsf{/a/b/c},\D,\E) \big\}$,
\item[] $\big\{\(\textsf{/a/b/c/d},\f_s,\E),
\(\textsf{/a/b/c},\D,\E) \big\}$ , and
\item[] $\big\{\(\textsf{/a/b/c/d},\f_s,\E) \big\}$
\end{itemz}
and, of course, all of them are canonical.

\subsection{Refluent sets}\label{subsec:refluent}

Following the terminology of \cite{CC22a}, canonical sets $A$ and $B$ are
called \emph{refluent} if there is at least one filesystem on which both of
them work (neither of them breaks). In general, the canonical sets $A_1,
\dots, A_k$ are \emph{jointly refluent} if there is a single filesystem on
which all of them work. It is clear that if $k$ canonical sets are jointly
refluent, then they are pairwise refluent as well. The converse statement,
that if the canonical sets $A_i$ are pairwise refluent, then they are also
jointly refluent, is stated and proved as Proposition \ref{prop:refluent} in
Section \ref{sec:theory}.
The concept of refluence arises naturally in file synchronization. Command
sequences representing changes to the replicas are refluent as they have
been applied to identical copies of the same filesystem.

\section{Filesystem synchronization}\label{sec:overview}

The synchronization paradigm used in this paper follows the traditional one
described in, e.g., \cite{BP98} and depicted in Figure \ref{fig:sync1}. At
the beginning of the synchronization cycle each replica stores an identical
copy of the same filesystem $\FS$. Due to local modifications the replicas
diverge, and after some time the filesystem in the $i$-th replica changes to
$\FS_i$. At a certain moment the replicas call for synchronization by
sending information extracted by the \emph{update detector} (run locally) to
a central server which hosts the reconciliation algorithm. After the server
has received all update information, it determines what the common
synchronized filesystem $\FSa$ will be. Then it sends instructions to the
replicas separately telling them how to transform their local filesystem to
the synchronized one. Finally, each replica executes the received
instructions which transforms their local copy to the synchronized state
$\FSa$, optionally informing the user about some (or all) of the conflicts
and how they have been resolved.

\subsection{Update detector}\label{subsec:update-detector}

Depending on the data communicated by the replicas, synchronizers are
categorized as either \emph{state-based} or \emph{operation-based}
\cite{CRDT-overview,syncpal-thesis}. In state-based synchronization replicas
send the current state of their filesystems, or merely the differences
between their current state and the last known synchronized state
\cite{AK08}. Frequently the local copy does not have access to the original
synchronized state because of its limited resources, and transmitting the
whole current state is prohibitively expensive. It is an active research
area to devise efficient transmission algorithms which transmit the
differences only \cite{rsync,LWJ13,PCS18,FQL12}. Operation-based
synchronizers transmit the complete log (or trace) of all operations
performed by the user \cite{K10}. It has been observed that in practice
these logs are poorly maintained and are not always reliable
\cite{Qia04,ZDA13}.

Theorem \ref{thm:basic} suggests that the update information the replicas
send to the central server (or, rather, the data on which the synchronizer
algorithm works) can be a canonical command set which transforms the
original synchronized filesystem to the current replica. By Theorem
\ref{thm:state-based} below this set is not only a succinct representation
of the differences, but can also be generated in time proportional to the
size of the filesystems (by traversing the original $\FS$ and the modified
filesystem $\FS_i$ simultaneously).

\begin{theorem}[{\cite[Theorem 19]{CC22a}}]\label{thm:state-based}
Let $\FS$ and $\FSa$ be two filesystems. The command set
$$
    A_{\FS{\shortrightarrow}\FSa}=\{ \(n,{\FS(n)},{\FSa(n)}) : n\in \N 
   \mbox{ and } \FS(n)\neq\FSa(n)\,\}
$$
is canonical, and $A_{\FS{\shortrightarrow}\FSa}\FS=\FSa$.
\qed
\end{theorem}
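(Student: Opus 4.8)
The statement has two parts: that the command set $A_{\FS\shortrightarrow\FSa}$ is canonical, and that applying it to $\FS$ yields $\FSa$. The plan is to verify the three defining conditions of a canonical set (Definition~\ref{def:canonical}) directly from the construction, and then to argue the semantic equality by combining the per-node effect of the commands with Theorem~\ref{thm:basic}\,\emph{d)}.

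First I would dispatch the two easy conditions. By construction $A_{\FS\shortrightarrow\FSa}$ contains exactly one command per node $n$ with $\FS(n)\neq\FSa(n)$, and none for the nodes where the two filesystems agree, so $A$ trivially contains at most one command on each node. Moreover, since each such command $\(n,\FS(n),\FSa(n))$ has distinct input and output values (that is precisely the selection criterion $\FS(n)\neq\FSa(n)$), none of them is a null command. This leaves $\orel$-connectedness as the real content. Suppose $\sigma=\(n,\FS(n),\FSa(n))$ and $\tau=\(m,\FS(m),\FSa(m))$ lie in $A$ with $n\prec m$ (the case $m\prec n$ is symmetric). I would show that for \emph{every} node $p$ strictly between $n$ and $m$ on the branch, $\FS(p)\neq\FSa(p)$, and that consecutive commands along this branch are $\orel$-related. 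The key lever is the tree-property, which both $\FS$ and $\FSa$ satisfy: along a branch the type sequence is non-increasing in the order $\D > \F > \E$ once it drops below $\D$, i.e. directories sit above files which sit above empty nodes. Since $n$ carries content in $\FS$ and in $\FSa$ that differs, and $m$ (a descendant) likewise changes, one traces how the type must change at the intermediate nodes: if $\type(\FS(p))=\type(\FSa(p))$ for some intermediate $p$ with the two values still unequal that is only possible in the file case, and I would check that even then the command on $p$ is an edit and the required $\orel$-chain still passes through it; the more delicate situation is when an intermediate node would have equal values, which I claim the tree-property forbids given that both an ancestor and a descendant change. Concretely, if $\sigma$ is a constructor (the $\FSa$ side built something at $n$ that was smaller or empty in $\FS$), then for $m\succ n$ to hold content in $\FSa$, every node between must be a directory in $\FSa$; if some such $p$ had $\FS(p)=\FSa(p)=\D$, then $\FS$ already had a directory at $p$ and hence, by the tree-property applied to $\FS$, a directory or file all the way up to $n$, contradicting that $n$ strictly increased its type; a parallel argument handles the destructor case using the tree-property of $\FSa$. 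Having located commands on all intermediate nodes, the $\orel$-relation between each parent–child consecutive pair follows by matching against the two patterns in Definition~\ref{def:exec-order}: a constructor chain matches $\(\parent p,\E\F,\D)\orel\(p,\E,\F\D)$, a destructor chain matches $\(p,\D\F,\E)\orel\(\parent p,\D,\F\E)$.

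For the second assertion, once $A=A_{\FS\shortrightarrow\FSa}$ is known to be canonical, Theorem~\ref{thm:basic}\,\emph{d)} gives an $\orel$-honoring ordering $\alpha$ of its commands, and Theorem~\ref{thm:basic}\,\emph{b)} guarantees $\alpha$ is non-breaking; by part~\emph{a)} the choice of ordering does not affect the resulting filesystem, so $A\FS$ is well-defined. It remains to see that $\alpha$ does not break $\FS$ specifically and that $\alpha\FS=\FSa$. Each command $\(n,\FS(n),\FSa(n))$ has its precondition met at the moment of application provided no earlier command in $\alpha$ has already touched node $n$ — which cannot happen, as $A$ has at most one command per node — so the precondition, which asks for the original value $\FS(n)$, holds; thus every command is applicable at its node, and afterwards the content at $n$ is $\FSa(n)$. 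Nodes not mentioned by $A$ are untouched and already satisfy $\FS(n)=\FSa(n)$. Hence the final filesystem agrees with $\FSa$ at every node, i.e. $A\FS=\FSa$; in particular $\alpha$ does not break $\FS$ (the end state $\FSa$ has the tree-property), which also re-confirms non-breakingness independently of part~\emph{b)}.

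The main obstacle is the $\orel$-connectedness argument: one must rule out the possibility of an unchanged node sitting on a branch between two changed nodes, and this is exactly where the tree-property of \emph{both} $\FS$ and $\FSa$ has to be invoked in tandem, with a careful case split on whether the endpoints' commands are constructors, destructors, or edits. Everything else — the no-null-commands and one-per-node conditions, and the semantic computation $A\FS=\FSa$ — is routine bookkeeping once that structural fact is in hand.
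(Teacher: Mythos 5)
The paper does not prove this statement; it cites it from \cite[Theorem~19]{CC22a} and closes the theorem environment with $\qed$. So there is no in-paper proof to compare against, and I assess the proposal on its own terms.

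Your overall plan is the right one, and the two easy canonicality conditions (no null commands; at most one command per node) are dispatched correctly. The $\orel$-connectedness argument is essentially on the right track: the tree property of $\FS$ and of $\FSa$ jointly forces every intermediate node on a branch between two changed nodes to be changed as well, and the resulting parent--child pairs do fall into the constructor or destructor patterns of Definition~\ref{def:exec-order}. But one detail is actually wrong as stated: you say that if an intermediate node carries an edit (both values files) ``the required $\orel$-chain still passes through it.'' It cannot---an edit $\(p,\F,\F)$ matches neither side of either pattern in Definition~\ref{def:exec-order}, so edits have no $\orel$-neighbors at all. The correct observation is that this case never arises: if $\FS(p)$ and $\FSa(p)$ are both files, the tree property forces every node below $p$ to be empty in both $\FS$ and $\FSa$, so there is no command below $p$ to connect to. You should have ruled the case out rather than tried to incorporate it.

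The more substantial gap is in the second half. You verify that each command's \emph{precondition} $\FS(n)=x$ holds when it is reached (correct, since there is one command per node), and conclude ``thus every command is applicable at its node.'' But applicability in this paper is precondition \emph{plus} preservation of the tree property (Section~\ref{subsec:filesystem}); satisfying the precondition alone is not applicability. Likewise, arguing that ``the end state $\FSa$ has the tree-property, so $\alpha$ does not break $\FS$'' is invalid: a sequence breaks a filesystem if any \emph{intermediate} state violates the tree property, and invoking Theorem~\ref{thm:basic}\,\emph{b)} only gives that some filesystem is not broken, not $\FS$ specifically. To close this, the clean route is Claim~\ref{claim:1}: condition~a) is immediate; for b), a destructor at $n$ has $\FSa(n)\neq\D$, so the tree property of $\FSa$ forces $\FSa(n')=\E=\FS(n')$ at every unmentioned descendant $n'$; for c), a constructor at $n$ has $\FSa(n)\neq\E$, so $\FSa(n')=\D=\FS(n')$ at every unmentioned ancestor $n'$. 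That establishes applicability of $A$ to $\FS$, after which the per-node calculation yields $A\FS=\FSa$ exactly as you describe. Without this step (or an equivalent inductive argument along the $\orel$-honoring order showing every intermediate state is a valid filesystem), your proof of $A\FS=\FSa$ does not go through.
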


If the replica has a complete log of the commands executed by the user (as
in an operation-based update detector), that is, if the updates have been
collected in a command sequence $\alpha_i$ which transforms $\FS$ to
$\FS_i$, then $\alpha_i$ can be transformed into the requested canonical set
as claimed by Theorem \ref{thm:basic}\,c). As detailed in Algorithm
\ref{alg:canonize} in Section \ref{sec:algorithms} this transformation can
be done in near linear time in the size of $\alpha_i$, which can be much
faster than traversing the whole filesystem.

\subsection{Synchronization}\label{subsec:marger}

The central server, having received the canonical sets from the replicas
describing the local changes, must resolve all conflicts between the updates
and generate a common, synchronized filesystem. Conflict resolution should
be intuitively correct, thus discarding all changes made by the replicas is
not a viable alternative. While the majority of practical and theoretical
synchronizers do not present any rationale to explain their specific
conflict resolution approach \cite{syncpal-thesis}, two notable exceptions
\cite{TSR15}, \cite{NSC16} describe high-level consistency philosophies. In
\cite{TSR15} the main principles are \emph{no lost update} (preserve all
updates on all replicas because these updates are equally valid), and
\emph{no side effects} (do not allow objects to unexpectedly disappear).
While these principles make intuitive sense, neither can possibly be upheld
for every conflict. In \cite{NSC16} the relevant consistency requirements
are \emph{intention-confined effect} (operations applied to the replicas by
the synchronizer must be based on operations generated by the end-user), and
\emph{aggressive effect preservation} (the effect of compatible operations
should be preserved fully, and the effect of conflicting operations should
be preserved as much as possible). These requirements are, in fact,
variations of the OT consistency model \cite{SE98}. Note that the other two
OT principles -- convergence and causality preservation -- do not apply to
filesystem synchronizers.

In keeping with the prescriptive nature of the above principles, we proceed
by \emph{defining} what a synchronized state is, rather than creating it by
some ad hoc method. Suppose that the original filesystem is $\FS$, the
modified filesystem at the $i^\textrm{th}$ replica is $\FS_i=A_i\FS$, where
$A_i$ is the canonical set submitted as input to the reconciler. The
synchronized or \emph{merged state} $\FSa$ is determined by the canonical
set $M$ called \emph{merger} such that $\FSa=M\FS$ where $M$ satisfies the
following two conditions:
\begin{itemz}
\item[1)]
every command in $M$ is submitted by one of the replicas;
\item[2)]
the canonical set $M$ is maximal with respect to the first condition.
\end{itemz}
The first condition ensures that the synchronization satisfies the
\emph{intention-confined effect:} there are no surprise changes in the
merged filesystem. The \emph{aggressive effect preservation} is guaranteed
by the second condition. As $M$ is maximal it preserves as much of the
intention of the users as possible. In general, there can be many different
mergers satisfying these conditions. The reconciler must choose one of them
either automatically using some heuristics, or manually as instructed by the
user.

Observe that the canonical sets $A_i$ describing the local changes are
jointly refluent (see Section \ref{subsec:refluent}), as all of them can be
(were) applied to the original filesystem $\FS$. The formal definition of a
merger is as follows.

\begin{definition}[Merger]\label{def:merger}
The merger of the jointly refluent canonical sets $A_1, \dots,A_k$ is a
maximal canonical set $M\subseteq \bigcup_i A_i$. The corresponding
synchronized state of the replicas $A_1\FS,\dots,A_k\FS$ is $M\FS$.
\end{definition}

This definition requires the absolute minimum. Due to its simplicity it is
clear, intuitively appealing, and captures the desired properties of a
synchronized state. It remains to be seen whether it is also sufficient or
it is an oversimplification. 
Surprisingly, this definition is indeed sufficient. Mergers provided
by Definition \ref{def:merger} satisfy many additional desirable properties
without any further requirements. Some of these properties are discussed
in the next subsections.

For the rest of this section we fix the canonical sets $A_i$ and the
original filesystem $\FS$ so that the current state of replica $i$ is the
valid filesystem $\FS_i=A_i\FS$. Consequently none of the sets $A_i$ breaks
$\FS$, and therefore the command sets $A_i$ are jointly refluent.

As an example, suppose $\FS$ contains directories at \textsf{/a},
\textsf{/a/b}, and single file at \textsf{/a/b/c}. The first replica deletes
the file and all directores above it. The second replica creates a file
below \textsf{/a/b}. The third replica creates the same file but with
different content, and also creates a copy of that file below \textsf{/a}.
The canonical sequences describing these changes are
\begin{itemz}[5pt plus 2pt]
\item[$A_1$]\hspace{-1em}${}=\{\sigma_1,\sigma_2,\sigma_3\}$, where $\sigma_1=\(\textsf{/a/b/c},\f_o,\E)$,
 $\sigma_2=\(\textsf{/a/b},\D,\E)$, $\sigma_3=\(\textsf{/a},\D,\E)$;
\item[$A_2$]\hspace{-1em}${}=\{\tau \}$, where
 $\tau= \(\textsf{/a/b/z},\E,\f_z)$;
\item[$A_3$]\hspace{-1em}${}=\{\rho_1,\rho_2\}$, where
 $\rho_1= \(\textsf{/a/z},\E,\f_u)$, and $\rho_2=\(\textsf{/a/b/z},\E,\f_u)$.
\end{itemz}
It is clear that $\sigma_3$ is in conflict with all commands in $A_2$ and
$A_3$; $\sigma_2$ is in conflict with $\tau$ and $\rho_2$. Commands $\tau$
and $\rho_2$ are also in conflict; and a merger containing $\sigma_3$ must
also contain $\sigma_2$. $\sigma_1$ is compatible with all other commands,
thus it will be in each possible mergers. If $\sigma_3$ is in the merger,
then it can contain no commands from either $A_2$ or $A_3$. If $\sigma_2$ is
not present, then only the conflict $\tau$ vs{.} $\rho_2$ remains. Thus
there are four mergers, namely
\begin{align*}
M_1&=\{\sigma_1,\sigma_2,\sigma_3\}, ~~~
M_2=\{\sigma_1,\sigma_2,\rho_1\},\\
M_3&=\{\sigma_1,\tau,\rho_1 \}, ~~\mbox{ and }~~
M_4=\{\sigma_1,\rho_1,\rho_2 \}.
\end{align*}
Each merger describes a possible synchronized state, and each one can be the
desired one under the right circumstances.

\subsection{Mergers are applicable to the filesystem}\label{subsec:mergerOK}

While Definition \ref{def:merger} does not require $M$ to work on the
original filesystem $\FS$, it never breaks it as proved in Proposition
\ref{prop:merger} in Section \ref{sec:theory}. Every merger creates a
meaningful synchronized state and never breaks the original filesystem.

\subsection{Mergers can be created in near linear time}\label{subsec:create-merger}

Mergers can be created by a simple greedy algorithm. Proposition
\ref{prop:extend} in Section \ref{sec:theory} states that a non-maximal
canonical subset of $\bigcup_i A_i$ can always be extended by some command
from $\bigcup_i A_i$ so that it remains canonical. Consequently, starting
from the empty set and adding commands from $\bigcup_i A_i$ one by one while
keeping the set canonical produces a merger. In particular, any canonical
subset of $\bigcup_i A_i$ can be extended to a merger. Since checking
whether a command set is canonical takes linear time (Algorithm
\ref{alg:canonical}), this na\"ive approach requires cubic time. Algorithm
\ref{alg:merger} creates a merger in near linear time. To generate all
mergers in nondeterministic linear time, we need a more sophisticated
algorithm discussed in Section \ref{sec:all-mergers}.

\subsection{Mergers have an operational characterization}\label{subsec:operational}

The synchronized state defined by the merger $M$ as $\FSa=M\FS$ has a clear
operational characterization. The local replica $\FS_i$ can be transformed
into the merged state by first rolling back some of the local operations
executed on that replica, then applying additional commands executed on
other replicas. By Proposition \ref{claim:forward}, $M\cap A_i$ is an
initial segment of both $A_i$ and $M$, thus
$$
  A_i \equiv (M\cap A_i)\circ(A_i\setm M).
$$
Rolling back the commands in $A_i\setm M$, that is, executing the canonical
set $(A_i\setm M)^{-1}$ on $\FS_i$ gives $(M\cap A_i)\FS$. Then, applying
the canonical set $M\setm A_i$ yields the filesystem
$$
    (M\cap A_i)\circ(M\setm A_i)\FS=M\FS=\FSa.
$$
In summary, the $i$-th replica should execute the command set
$$
   (A_i\setm M)^{-1}\circ (M\setm A_i)
$$
on its local copy $\FS_i$ to transform it into the synchronized state
$\FSa$. The rolled back commands in $A_i\setm M$ give a clear indication of
the local changes that are discarded. This command set could be presented to
the user to decide whether some of them should be reintroduced.

\smallskip

Choosing the merger $M_3=\{\sigma_1,\tau,\rho_1\}$ in the example above, the
first replica should roll back $\sigma_2$ and $\sigma_3$ by executing
$\{\sigma_2^{-1},\sigma_3^{-1}\}$. These commands restore the directories at
\textsf{/a} and \textsf{/a/b}. They are followed by executing
$\{\tau,\rho_1\}$ which adds the two files.
To reach the same synchronized state the second replica need not roll back
any of its commands. Executing $\{\sigma_1,\rho_1\}$ directly deletes the
file at \textsf{/a/b/c} and creates the new file at \textsf{/a/z}.
Finally, the third replica should roll back $\rho_2$ (after this the
precondition in $\tau$ holds so $\tau$ will not break the filesystem when
executed), and then execute the commands $\tau$ and $\sigma_1$ in any order.

\smallskip

Observe that after synchronization none of the rolled back commands is
applicable anymore. $\sigma_2$ and $\sigma_3$ would delete directories which
are not empty, while $\rho_2$ would create a file which already exists. This
is true in general. The maximality of the merger set $M$ implies that none
of the rolled back commands can be executed directly on the synchronized
state $\FSa$. Either its input condition would fail (modification by some
other replica on that node took precedence), or the command would destroy
the tree property (deleting a non-empty directory, or creating a file under
a non-existent directory). Therefore the changes represented by the rolled
back commands can only be reintroduced in a different form.

\subsection{Mergers can be created via conflict resolution}\label{subsec:merger-conflict}

Synchronizers typically work by identifying and resolving conflicts until no
conflicts remain. While Definition \ref{def:merger} specifies the
synchronized state directly, we can construct mergers via identifying
conflicts as well. Clearly two commands are in conflict if they cannot occur
together in the same canonical set. In our case, however, a weaker notion of
conflict also works.

\begin{definition}\label{def:conflict}
The commands $\sigma,\tau\in \bigcup_i A_i$ are in \emph{conflict} if either
\begin{itemz}[1pt]
\item[a)] they are different commands on the same node, or
\item[b)] the node of $\sigma$ is above the node of $\tau$,
$\sigma$ creates a non-directory and $\tau$ creates non-empty content.
\qedhere
\end{itemz}
\end{definition}

Commands $\tau= \(\textsf{/a/b/z},\E,\f_z)$ and
$\rho_2=\(\textsf{/a/b/z},\E,\f_u)$ from the example above are in conflict
as they are different commands on the same node. Command
$\sigma_3=\(\textsf{/a},\D,\E)$ creates a non-directory, thus it is in
conflict with every command below it which creates a file -- that is, all
commands in $A_2$ and $A_3$ --, but it is not in conflict with
$\sigma_1=\(\textsf{/a/b/c},\f_o,\E)$ and $\sigma_2=\(\textsf{/a/b},\D,\E)$
which create empty content..

By Proposition \ref{prop:conflict1} canonical sets (and thus mergers) do not
have conflicts at all, while Theorem \ref{prop:conflict2} claims that a
maximal command set of $\bigcup_i A_i$ without conflicts is a merger. The
synchronizer, having received the command sets $A_i$, can create a
\emph{conflict graph}, whose vertices are the commands in $\bigcup_i A_i$
and edges connect conflicting commands. Mergers correspond to the maximal
independent vertex sets of this graph. Creating a merger via conflict
resolution can therefore be done using the following procedure: pick an edge
of the graph representing a conflict between, say, $\sigma$ and $\tau$.
Choose either $\sigma$ or $\tau$ as the winner, and delete all vertices
connected to the winner (i.e., those vertices which cannot be in the same
merger as the winner). When there are no more edges, the remaining vertices
form a maximal independent set, a merger.

Using this conflict graph the synchronizer can make smart decisions, a
feature which is painfully missing in commercial and other theoretical
synchronizers. When choosing the conflict to be resolved and the winner of
the conflict, the decision can take into account not only local information
(the conflicting command pair), but also the effect of the decision on other
conflicts.

Creating and manipulating the conflict graph can be done in quadratic time
\cite{graph-alg}. As this graph has a very special structure and not too
many edges, the actual running time could be better. As our main interest in
this paper was developing subquadratic algorithms, we did not pursue this
line of research further. Using the conflict resolution strategy Algorithm
\ref{alg:merger} in Section \ref{sec:algorithms} creates a merger in near
linear time. Unfortunately, this algorithm, as explained later, cannot
generate all mergers in nondeterminstic linear time. For that task we need
further ideas explored in Section \ref{sec:all-mergers}.

\subsection{Mergers support asynchronous and offline synchronization}\label{subsec:async-sync}

As usual, let the filesystem at the $i$-th replica be $\FS_i=A_i\FS$, with
the replica sending the canonical command set $A_i$ to the server for
synchronization. Section \ref{subsec:operational} discussed that when the
server returns the merger $M$, the replica should execute
$$
    (A_i\setminus M)^{-1} \circ (M\setminus A_i)
$$
on the local copy to transform it to the synchronized state $\FSa=M\FS$.
Optionally, the replica could present the conflicting command set
$A_i\setminus M$ to the user for inspection.

Suppose the replica $\FS_i$ has not been locked, and by the time the reply
$M$ arrives from the server, it has changed to $\FS'_i= A'_i\FS$, where
$A'_i$ is a new canonical set describing the differences between the current
state $\FS'_i$ and the original common state $\FS$; see Figure
\ref{fig:sync2}. In this case, the local machine should transform the
replica to the synchronized state $\FSa$ and carry over those extra changes
that are still executable. To this end it invokes the the synchronization
algorithm for the canonical sets $A'_i$ and $M$ making sure that the
returned merger $M^*$ contains $M$. (This can be achieved e.g., by
constructing the merger via conflict resolution and making sure that each
conflict is resolved in favor of a command in $M$.) By Claim
\ref{claim:forward} $M$ will be an initial segment of $M^*$, thus
$M^*=M\circ M'$ where $M'$ is the canonical set $M'=M^*\setm M$. Apply the
commands
$$
   C'_i =  (A'_i\setminus M^*)^{-1} \circ (M^*\setminus A'_i)
$$
to the filesystem $\FS'_i$ and return $A'_i\setminus M^*$ as the conflicting
command set. At this moment the local filesystem is
$$
    C'_i\FS'_i = C'_i(A'_i\FS) = M^*\FS = M'(M\FS)= M'\FSa,
$$
which is exactly the synchronized filesystem $\FSa$ to which the command set
$M'$ has been applied. The commands in $M'$ can be incorporated easily into
the next round of synchronization.

In fact, the same method works even if the replica did not take part in the
process of determining the merger $M$. Thus \emph{latecomers} or
\emph{offline}
replicas, who did not participate in determining the merged state, can still
upgrade to it without losing their ability to take part in subsequent
synchronization rounds.

\section{Algorithms}\label{sec:algorithms}

Let us begin with some of the properties, assumptions and decisions the
algorithms rely on. In our model a filesystem command has three components:
the node it operates on and the input and output values. Each component is
stored in some constant space (using pointers, if necessary). Commands can
be sorted using any lexicographic order on the nodes which is consistent
with the ``parent'' function. In the standard example, a node (path) name is
a sequence of identifiers separated by the slash character. With the
assumption that comparing two path strings lexicographically takes constant
time, sorting $n$ filesystem commands can be done deterministically in
$O(n\log n)$ time \cite{knuth97}. We also assume that other path
manipulating algorithms, such as returning the parent of a node, or deciding
whether a node is above another one, also take constant time. Similarly, we
presuppose that operations on filesystem values (determining their type,
checking their equality or comparing them for sorting) can also be done in
constant time.

Almost all algorithms assume that the input commands are in a doubly linked
list sorted lexicographically by the nodes of the commands. The time and
space complexity estimates of the algorithms typically exclude this sorting
time.
A proof-of-concept implementation of the algorithms presented in this paper
in Python can be found at
\url{https://github.com/csirmaz/algebraic-reconciler}.

\subsection{The $\up$ structure}

Our first algorithm will be used many times, frequently tacitly, as an
auxiliary tool. It enhances a set of nodes by adding an extra $\up$ pointer
between the nodes. This pointer at node $n$ is $\bot$ if no other node in
the set is above $n$; otherwise, it points to the node in the set which is
its lowest ancestor. In particular, if the parent of $n$ is also in the set,
then $\up(n)$ is the parent of $n$.

\begin{algorithm}[Adding $\up$ pointers, Code \ref{code:1}]\label{alg:add-up}
Sort the nodes lexicographically, and check them in increasing order.
Suppose we have finished processing node $n$, and the next node in the list
is $m$. Find the first node in the sequence $n$, $\up(n)$, $\up(\up(n))$,
etc{.} which is above $m$. If found, set $\up(m)$ to this node. If none of
them is above $m$ or $m$ is the first node, then set $\up(m)$ to $\bot$.
\end{algorithm}


\begin{pseudocode}{%
Given a lexicographically sorted sequence of commands, add the
$\up$ pointers.\label{code:1}}
\ForEach{command \textbf{in} sequence}
  \If{this is the first command}
   \State      command$\pt$up${}\gets \bot$
   \Else
     \State  upCommand $\gets$ previousCommand
   \Loop{}
       \If{upCommand${}=\bot$ \textbf{or} \\ \hbox{\quad} upCommand$\pt$node
              is an ancestor of command$\pt$node}
         \State    command$\pt$up $\gets$ upCommand
         \State  \textbf{exit loop}
        \Else
          \State  upCommand $\gets$ upCommand$\pt$up
      \EndIf
  \EndLoop
 \EndIf
\EndFor
\end{pseudocode}

For correctness, observe that in the namespace the sequence $n$, $\up(n)$,
$\up(\up(n))$, etc{.} defines the right boundary of the nodes processed up
to $n$. Since the next node $m$ is to the right of the earlier nodes, its
ancestors in the given set must be in this list. After sorting, the running
time is linear as each $\up$ link is compared and discarded at most once,
and each $\up$ link is filled exactly once.\footnote{We are grateful to
G\'abor Tardos for devising this algorithm. It is used here with his
permission.}

\subsection{Checking and ordering canonical sets}

This algorithm checks whether the command set $A$ is canonical, assuming
that it is sorted lexicographically according to the nodes of the commands.
It checks the first two conditions of Definition \ref{def:canonical}
directly. Instead of the third condition ($A$ is $\orel$-connected), the
following clearly equivalent conditions are verified:
\begin{itemz}[1pt]
\item if $A$ contains a command on the node $n$ and also on an ancestor of
$n$, then it contains a command on the parent of $n$;
\item if $\sigma,\tau\in A$ are on parent--child nodes, then either
$\sigma\orel\tau$ or $\tau\orel\sigma$.
\end{itemz}

\begin{algorithm}[Determining if $A$ is canonical, Code \ref{code:2}]\label{alg:canonical}

Start with all commands in $A$ arranged in a doubly linked list according to
a lexicographic order of the command nodes. Loop through the commands and
check that there is one command on each node, and none of the commands is a
null-command. Run Algorithm \ref{alg:add-up} to define the $\up$ pointers.
Loop through the commands. If the $\up$ pointer is not $\bot$, then it must
point to the parent node; moreover, the current command and the command at
the parent must be $\orel$-related.
\end{algorithm}

\begin{pseudocode}{%
Check whether a set of commands is canonical.\label{code:2}}
\LComment{sort {\upshape sequence} lexicographically by the nodes of the commands}
\LComment{ add $\up$ pointers using Algorithm \ref{alg:add-up}}
\ForEach{command \textbf{in} sequence}
  \If{previousCommand$\pt$node $=$ command$\pt$node}
       
       \State \textbf{return false}
       \Comment{not canonical as multiple commands are on the same node}
 \EndIf
 \If{ command$\pt$up${}\neq\bot$ \textbf{and not} (\\
        \hbox{\quad}$\langle$command$\pt$up, command$\rangle$ is a constructor pair
        \textbf{or}\\
        \hbox{\quad}$\langle$command, command$\pt$up$\rangle$ is a destructor pair
    )}
       \State \textbf{return false}
       \Comment{\hfill not canonical because the closest command on an ancestor
                is not on the parent, or they do not form a valid pair}
  \EndIf
\EndFor
\State \textbf{return true} \Comment{this is a canoncal set}
\end{pseudocode}

A canonical set $A$ can always be ordered to honor $\orel$. Perhaps the
simplest way to obtain such an ordering is to make two passes through the
lexicographically sorted set $A$, as is done by Algorithm
\ref{alg:ordering}. 

\begin{algorithm}[Ordering a canonical set, Code \ref{code:3}]\label{alg:ordering}

Sort commands in a canonical set lexicographically. First, scan the commands
forwards (top-down) extracting constructor commands, and place them at the
beginning of the output sequence. Second, place the remaining commands on
the output sequence in reverse lexicographical order (bottom-up). This
includes destructors and edit commands matching $\(\bullet,\F,\F)$. It is
clear that this sequence order honors the $\orel$ relation.
\end{algorithm}

\begin{pseudocode}{%
Order a canonical command set and return a canonical sequence.
\label{code:3}}
\State sequence $\gets$ commandSet in some order
\LComment{sort {\upshape sequence} lexicographically by the nodes of the commands}

\ForEach{command \textbf{in} sequence}
    \If{ command is a constructor}
     \State   \textbf{push} command \textbf{on} output
    \EndIf
\EndFor
\ForEach{ command \textbf{in} sequence \textbf{backwards}}
    \If{ command is not a constructor}
    \State  \textbf{push} command \textbf{on} output
    \EndIf
\EndFor
\State \textbf{return} output
\end{pseudocode}

Both Algorithm \ref{alg:canonical} and Algorithm \ref{alg:ordering} of 
this section clearly run in near linear time.

\subsection{Transforming a sequence to a canonical set}

Given a non-breaking command sequence $\alpha$, Algorithm \ref{alg:canonize}
creates a canonical set $A$ which semantically extends $\alpha$ in near
linear running time.

\begin{algorithm}[Command sequence to canonical set, Code \ref{code:4}]\label{alg:canonize}

Sort the commands in $\alpha$ in a lexicographic order by their nodes,
retaining the original order where they are on the same node. Process them
from left to right. For any consecutive sequence of commands that are on the
same node (including one-element sequences), define a replacement command
that has the input value of the first command and the output value from the
last. If the two values are different, add the replacement command to the
result set.
\end{algorithm}

If $\alpha$ may be breaking, it is easy to check that the output and input
values of neighboring commands on the same node are equal, or if the
resulting set is indeed canonical. Failure of these checks implies that
$\alpha$ was breaking, though the algorithm may also successfully convert a
breaking sequence to a non-breaking canonical set.

\begin{pseudocode}{%
Return the canonical command set that is the semantic extension of this sequence.
\label{code:4}}
\LComment{sort {\upshape sequence} lexicographically by the nodes of the commands;
    in case of equality keep the original order }
\ForEach{command \textbf{in} sequence}
    \If{ this is the first command}
     \State   input $\gets$ command$\pt$inputValue
    \Else
        \If{ command$\pt$node $\neq$ prevCommand$\pt$node}
        \State  newCmd $\gets$ $\langle$prevCommand$\pt$node, input, prevCommand$\pt$outputValue$\rangle$
            \If{ newCmd is not a null command}
             \State 
                    output $\gets$ output${}\cup \{ $newCmd$ \}$
            \EndIf
         \State  input $\gets$ command$\pt$inputValue
        \EndIf
    \EndIf
\EndFor
\If{ sequence was not empty}
   \State newCmd $\gets$ $\langle$lastCommand$\pt$node, input, lastCommand$\pt$outputValue$\rangle$
    \If{ newCmd is not a null command}
      \State 
          output $\gets$ output${}\cup \{ $newCmd$ \}$
    \EndIf
\EndIf
\State \textbf{return} output
\end{pseudocode}

\subsection{Generating a merger in near linear time}\label{subsec:merger}

Theorem \ref{prop:conflict2} characterizes a merger of the jointly refluent
command sets $A_i$ as a maximal subset of $\bigcup_i A_i$ without conflicts.
This characterization can be turned into a greedy algorithm which generates
a merger in near linear time. Actually, the algorithm finds a maximal
independent vertex set of the conflict graph (discussed in Section
\ref{subsec:merger-conflict}) exploiting some special properties of this
graph.

According to Definition \ref{def:conflict}, commands $\sigma$ and $\tau$ are
in conflict if either they are acting on the same node; or if their nodes
are comparable, the upper command creates a non-directory and the lower
command creates a non-empty value. Loop through the commands in $\bigcup_i
A_i$ in a top-down order. At command $\sigma$, if $\sigma$ has been marked
as in conflict with some earlier command, then skip it. Otherwise keep
$\sigma$ and mark commands which are in conflict with $\sigma$ as
conflicting. It follows that if $\sigma$ is not skipped, it is not in
conflict with commands preceding it, and so conflicting commands are either
on the same node, or below the node of $\sigma$. To achieve the desired
speed, instead of scanning all subsequent commands immediately, we use lazy
bookkeeping. In essence, if $\sigma$ is selected, we flag its node to
remember to delete conflicting commands on descendant nodes. At each
subsequent node we check whether its parent has this flag. If yes, we flag
that node as well and process any conflicts accordingly. By Theorem
\ref{claim:check-refluent} the node set of jointly refluent canonical sets
is connected, thus this flag percolates properly to the descendants.

\begin{algorithm}[Generating a merger, Code \ref{code:5}]\label{alg:merger}
The inputs are the jointly refluent canonical sets $A_i$; the output is a
merger $M$. Sort the commands in $\bigcup_i A_i$ lexicographically and then
use Algorithm \ref{alg:add-up} to create the $\up$ pointers. Add a
``delete conflicts down'' flag to the node of each command, initially unset.

Loop through the commands of $\bigcup_i A_i$ in lexicographic order. At
command $\sigma$ at node $n$, check if the node of the command $\up$ points
to has the ``delete conflicts down'' flag set. If yes, then set this flag at
$n$ as well. If, additionally, $\sigma$ creates some non-empty content (it
is in conflict with a final command above it), then delete $\sigma$. If
$\sigma$ is not deleted, then mark it as ``final'' and delete all subsequent
commands on the same node $n$. If $\sigma$ has been marked ``final'' and it
creates a non-directory value, set the ``delete conflicts down'' flag at
$n$.

Commands marked as ``final'' form a maximal command set without conflicts,
thus they form a merger.
\end{algorithm}

\begin{pseudocode}{%
Given a set of jointly refluent canonical command sets, generate a merger.
\label{code:5}}
\State sequence $\gets$ union of commands in the command sets
\LComment{ sort {\upshape sequence} lexicographically by the nodes of the commands}
\LComment{ add $\up$ pointers using Algorithm \ref{alg:add-up}}
\ForEach{command \textbf{in} sequence}
 \If{ command$\pt$node $=$ deleteOnNode}
   \State \textbf{continue} \Comment{skip this command}
 \EndIf
 \If{ command$\pt$up${}\neq\bot$ \textbf{and}\\
        \hbox{\quad} command$\pt$up$\pt$node$\pt$delConflictsDown}
      \State command$\pt$node$\pt$delConflictsDown $\gets$ \textbf{true}
        \If{command$\pt$output${}\neq\E$} \Comment{non-empty}
           \State \textbf{continue} \Comment{skip this command}
        \EndIf
  \EndIf
    \State 
        merger $\gets$ merger${}\cup\{$command$\}$
    \State deleteOnNode $\gets$ command$\pt$node
    \If{ command$\pt$output${}\neq\D$} \Comment{non-directory}
    \State
        command$\pt$node$\pt$delConflictsDown $\gets$ \textbf{true}
     \EndIf
\EndFor
\State \textbf{return} merger
\end{pseudocode}

Unfortunately this algorithm cannot generate all possible mergers. The only
non-deterministic choice it can make is picking the winner among commands on
the same node which are not in conflict with previous commands. (Algorithm
\ref{alg:merger} chooses the first such command.) Otherwise, when the
algorithm encounters a command for the first time, it puts it into the final
list even if there might be mergers which do not contain this command. The
more sophisticated Algorithm \ref{alg:xxx} generates all mergers in
nondeterministic near linear time.

The second step in asynchronous synchronization discussed in Section
\ref{subsec:async-sync} requires not only a merger, but a merger which
extends a given canonical subset $C$ of $\bigcup_i A_i$. With some tweaks
Algorithm \ref{alg:merger} can be used for this task as well. The idea is
that commands in $\bigcup_i A_i$ are scanned twice. First, all commands are
deleted which are in conflict with some command in $C$. Second, use the
remaining commands only and proceed as in Algorithm \ref{alg:merger}. The
first scan requires, however, not only a ``conflicts down'' flag, but also a
``conflicts up'' flag. To ensure that the algorithm spends linear time
handling the upward conflicts, it should check whether this flag is set
first, and if yes, quit the upward processing. Otherwise it should set the
flag, process the node, and continue processing at the parent node. We leave
it to the interested reader to work out the details.

\section{Theory}\label{sec:theory}

This section contains supporting theoretical results from the Algebraic
Theory of Filesystems. Some of the results have been used to justify the
correctness of algorithms presented in Section \ref{sec:algorithms}.
Algorithm \ref{alg:refluent} that checks whether some canonical sets are
refluent is presented in this section as it uses the specific
characterization given in Theorem \ref{claim:check-refluent}. First, we look
at conditions which guarantee that a canonical set is applicable to a
filesystem. Then, these conditions will be used to characterize refluent
canonical sets.

\begin{claim}\label{claim:1}
The canonical set $A$ is applicable to the filesystem $\FS$ if and only if
the following conditions hold for every command $\sigma=\(n,x,y)\in A$:
\begin{itemz}
\item[a)] $\FS(n)=x$;
\item[b)] if $\sigma$ is a destructor, then $\FS(n')=\E$ at every node $n'$ 
below $n$ not mentioned in $A$;
\item[c)] if $\sigma$ is a constructor, then $\FS(n')=\D$ at every node $n'$
above $n$ not mentioned in $A$.
\end{itemz}
\end{claim}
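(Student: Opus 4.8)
The plan is to prove both directions by unwinding what it means for the canonical set $A$ to be applicable to $\FS$, using the $\orel$-honoring order guaranteed by Theorem \ref{thm:basic}\,d). Recall that applicability means: when we execute the commands of $A$ in some (equivalently, any) $\orel$-honoring order, each command's precondition holds at the moment it is executed, and the tree property is never violated. The key observation is that within a canonical set there is at most one command per node, so the only interaction between commands happens along $\orel$-chains, i.e.\ between a parent command and a child command that are a constructor pair or a destructor pair.

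For the ``only if'' direction, assume $A$ is applicable to $\FS$ and fix an $\orel$-honoring order. Condition (a) is immediate: since $A$ has only one command $\sigma=\(n,x,y)$ touching node $n$, the value at $n$ is still $\FS(n)$ when $\sigma$ runs, so the precondition forces $\FS(n)=x$. For (b), suppose $\sigma$ is a destructor and $n'\prec n$... wait, $n'$ is below $n$, so $n\prec n'$; let me restate: let $n'$ be a node strictly below $n$ not mentioned in $A$. Since $A$ changes no value at $n'$, the value there is $\FS(n')$ throughout the execution; in particular it is $\FS(n')$ immediately after $\sigma$ destroys the content at $n$ (turning it into $\E$ or $\F$). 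If $\FS(n')\neq\E$, then after $\sigma$ runs there is a non-empty node below a node whose value is now $\E$ or $\F$, violating the tree property, contradicting applicability. Hence $\FS(n')=\E$. Condition (c) is the mirror image: if $\sigma$ is a constructor and $n'\succ n$ is an unmentioned node below $n$, then by the $\orel$-honoring order all constructors on the chain from the roots down to $n$ have already executed, but there is no command on $n'$; were $\FS(n')\neq\D$, the state just before $\sigma$ creates content at $n$ would have a non-directory ancestor of a node that $\sigma$ is about to give a file/directory value — actually the cleanest phrasing: a constructor at $n$ requires $n$ to sit under a chain of directories all the way to a root, and any unmentioned ancestor $n'$ of $n$ keeps its original value $\FS(n')$, which must therefore already be $\D$.

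For the ``if'' direction, assume (a)--(c) and exhibit a successful execution in an $\orel$-honoring order; by Theorem \ref{thm:basic}\,a),d) any such order has the same effect. I would argue inductively along the chosen order that (i) each command's precondition holds and (ii) the tree property is maintained. For the precondition: a command on node $n$ is only preceded by commands on other nodes (one per node), and the only one that could have altered $\FS(n)$ is a command on the parent $\parent n$ or a child of $n$ — but $\orel$ only forces ordering between parent--child pairs that form constructor/destructor pairs, and in a constructor pair the parent's output at $\parent n$ being $\D$ is exactly what is needed, etc.; more carefully, because $A$ is $\orel$-connected, any command on an ancestor or descendant of $n$ is linked to the command on $n$ by a chain, and one checks the output of the command immediately above $n$ in $A$ (or the value $\FS(\parent n)$ if $\parent n\notin A$, which by (c) is $\D$ when needed) matches the precondition $x$ of $\sigma$. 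For the tree property: the only way to break it is to have a non-directory with a non-empty child; after executing a prefix of the order, a node's value is either its original $\FS$-value (if unmentioned and its command hasn't run, which for an $\orel$-honoring order means no relevant ancestor/descendant command has run either) or the output of its command; conditions (b) and (c) guarantee exactly that unmentioned nodes below a destructor are empty and unmentioned nodes above a constructor are directories, which is what is needed to preserve the tree property at each step.

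The main obstacle is the bookkeeping in the ``if'' direction: making precise, for a given $\orel$-honoring order, what the filesystem looks like after an arbitrary prefix, and checking the tree property holds at every intermediate step rather than just at the end. The clean way around this is to use $\orel$-connectedness to reduce every check to a purely local one between a node, its in-$A$ neighbours along $\orel$-chains, and its unmentioned neighbours (handled by (b)/(c)), and to exploit the fact (from Theorem \ref{thm:basic}) that it suffices to verify one convenient order — e.g.\ the ``constructors top-down, then everything else bottom-up'' order produced by Algorithm \ref{alg:ordering} — for which the intermediate filesystems are easy to describe explicitly.
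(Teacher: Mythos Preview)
Your approach is essentially the paper's: necessity by observing that unmentioned nodes keep their original values and so must already satisfy the tree property around $\sigma$'s node, and sufficiency by induction along an $\orel$-honoring execution. The paper streamlines the sufficiency direction with one observation you do not make explicit: for an $\orel$-minimal command $\sigma\in A$, if $\sigma$ is a constructor then $A$ contains \emph{no} command on any node above $n$ (and dually, if $\sigma$ is a destructor then none below $n$), so conditions (a) and (c) (resp.\ (b)) immediately yield that $\sigma$ is applicable to $\FS$; it then simply notes that (a)--(c) inherit to $A\setminus\{\sigma\}$ on $\sigma\FS$ and inducts, avoiding the prefix bookkeeping you flag as the main obstacle.
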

\begin{proof}
The conditions are necessary. Condition a) is clear. For b) and c) note that
no command in $A$ changes the filesystem value at $n'$, and after executing
$\sigma$, the value at $n'$ must be empty (or directory in case c),
respectively). To show that the conditions are sufficient, let $\sigma\in A$
for which there is no $\tau\in A$ where $\tau\orel\sigma$. Then $\sigma$ can
be executed on $\FS$ as $\FS(n)=x$ by condition a), and because if $\sigma$
is a constructor, then no commands on nodes above $n$ are in $A$, thus the
values at those nodes are $\D$; and if $\sigma$ is a destructor, no command
below $n$ is in $A$, thus all nodes there contain the empty value.
Furthermore, conditions a)--c) clearly inherit to the filesystem $\alpha\FS$
and the command set $A\setminus\alpha$.
\end{proof}

\subsection{Characterizing refluent sets}

\begin{claim}\label{claim:2}
The canonical sets $A$ and $B$ are refluent if and only if the following
conditions hold:
\begin{itemz}[1pt]
\item[a)] if $\sigma\in A$ and $\tau\in B$ are on the same node,
then their input values are the same;
\item[b)] if $\sigma,\tau\in A\cup B$ are on comparable nodes, then
for each node $n'$ between them there is a command in $A\cup B$ on $n'$;
\item[c)] suppose $\sigma,\tau\in A\cup B$ are on nodes $\parent n$ and $n$,
respectively. If one of the sets mentions $n$ but not $\parent n$, then the
input of $\sigma$ is $\D$; if one of the sets mentions $\parent n$ but not
$n$, then the input of $\tau$ is $\E$.
\end{itemz}
\end{claim}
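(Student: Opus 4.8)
The plan is to prove Claim~\ref{claim:2} by reducing it to Claim~\ref{claim:1}. The key observation is that $A$ and $B$ are refluent precisely when there is a single filesystem $\FS$ to which both are applicable; and by Theorem~\ref{thm:basic}\,a) the order in which we apply the commands within each canonical set does not matter. So the strategy is: first show the three conditions are necessary by exhibiting, for each one, how a common filesystem $\FS$ forces it; then show sufficiency by using the conditions to explicitly construct a filesystem $\FS$ on which both $A$ and $B$ work, and invoke Claim~\ref{claim:1} to verify applicability.

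For necessity, suppose $\FS$ is a filesystem that neither $A$ nor $B$ breaks. Condition a) is immediate: if $\sigma=\(n,x,y)\in A$ and $\tau=\(n,x',y')\in B$, then applicability forces $\FS(n)=x$ and $\FS(n)=x'$, so $x=x'$. For b), if $\sigma,\tau\in A\cup B$ are on comparable nodes $n\prec m$ and some intermediate node $n'$ carried no command in either set, consider the set (say $A$) containing $\sigma$: either $\sigma$ or $\tau$ is in $A$, and since $A$ is $\orel$-connected, $A$ must already have a command on every node between the nodes of its two commands --- but this only chains within one set. The cleaner route is to note that $\sigma$ is a constructor or destructor or edit, and in the mixed case use Claim~\ref{claim:1}\,b),c) applied to each of $A$ and $B$ separately on the filesystem $\FS$: whichever of $\sigma,\tau$ is a structural command pushes a constraint ($\D$ or $\E$) onto $n'$, and the command on the \emph{other} end (in the other set) forces the opposite, a contradiction --- unless there actually is a command on $n'$. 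Condition c) is the analogous statement specialized to a parent--child pair: if $n\in A$ (say) but $\parent n\notin A$, then applicability of $A$ to $\FS$ via Claim~\ref{claim:1} requires either $\FS(\parent n)=\D$ (if the command on $n$ is a constructor, by part c) or the command on $n$ is a destructor/edit and then combined with a command on $\parent n$ in $B$, the $\orel$-relation forces types. The precise bookkeeping of which of the two endpoints lies in which set, and whether it is a constructor or destructor, is the fiddly part.

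For sufficiency, assume a), b), c) hold. The plan is to build $\FS$ node by node. On every node $n$ mentioned by some command in $A\cup B$, set $\FS(n)$ to be the common input value --- well-defined on nodes appearing in both sets by a), and unambiguous on nodes in only one set. On nodes not mentioned by either set, we must choose a value so that (i) $\FS$ has the tree-property, and (ii) the conditions b),c) of Claim~\ref{claim:1} hold for every command of $A$ and every command of $B$. The natural choice: a node $n'$ unmentioned by $A\cup B$ gets value $\D$ if it lies above some constructor in $A\cup B$ along a chain with no intervening unmentioned-and-below obstruction, and $\E$ if it lies below a destructor; conditions b),c) of Claim~\ref{claim:2} are exactly what guarantee these two prescriptions never collide on the same node, and that the resulting assignment is consistent with the tree-property (directories above, then at most one file, then empties). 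Then applying Claim~\ref{claim:1} to $A$ on $\FS$ and to $B$ on $\FS$ gives that both are applicable.

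The main obstacle I expect is sufficiency --- specifically, proving that the value assignment on the unmentioned nodes is both well-defined (the "directory from above" and "empty from below" constraints never conflict) and actually yields a filesystem with the tree-property. This is where conditions b) and c) must be used in full strength: b) prevents "gaps" along a comparable chain, so the $\orel$-structure of each set extends coherently, and c) pins down the boundary behavior at the parent--child interface where a chain in one set meets a node only the other set (or neither) touches. A careful case analysis on the topmost and bottommost structural commands along each root-to-leaf branch, together with Claim~\ref{claim:1}, should close it; I would organize the argument branch by branch in the namespace forest rather than command by command.
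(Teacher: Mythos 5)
Your overall plan coincides with the paper's: in both directions the paper reduces Claim~\ref{claim:2} to Claim~\ref{claim:1}, getting necessity from the tree property of any common filesystem and sufficiency by explicitly building a filesystem $\FS$ from the common input values and then invoking Claim~\ref{claim:1}.

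There is, however, a concrete gap in your sufficiency construction. You assign $\D$ to unmentioned nodes above \emph{constructors} and $\E$ to unmentioned nodes below \emph{destructors}, but this leaves two classes of required assignments unaddressed. First, edit commands (matching $\(n,\F,\F)$) are neither constructors nor destructors, yet their file input forces $\D$ at every ancestor and $\E$ at every descendant; your rule never touches those nodes, so $\FS$ may fail the tree property (and Claim~\ref{claim:1} would not apply). Second, and more generally, the \emph{input value} placed at a mentioned node already imposes constraints independent of whether the command there is a constructor or destructor: if that input is non-empty, ancestors must be $\D$; if it is non-directory, descendants must be $\E$. The paper handles both issues by adding the extra clause ``set the content to directory at each node above a non-empty node, and set the content to empty below every non-directory node'' on top of the constructor/destructor rule, and then observing that conditions b) and c) guarantee these assignments never overwrite the values at mentioned nodes. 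Without that clause your $\FS$ need not be a valid filesystem, so Claim~\ref{claim:1} cannot be invoked and the ``should close it'' step does not close. Also, as a minor remark on your necessity argument: you need not case-split on whether $\sigma$ is structural --- \emph{every} non-null command has a non-empty value (forcing directories above) and a non-directory value (forcing empties below) at either its input or output, which is exactly the observation the paper uses.
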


\begin{proof}
The conditions are necessary. It is clear for a). For the other two
conditions suppose $A$ can be applied to $\FS$. Observe that according to
Claim \ref{claim:1} if $A$ has a command on node $n$ but not on nodes above
$n$, then on nodes above $n$ the filesystem must contain directories; and if
there are no commands in $A$ below $n$ then all nodes below $n$ must be
empty.

For the other direction we use Claim \ref{claim:1}, too. Set the content at
each node mentioned in $A\cup B$ to the common input value. Additionally,
set the content to directory at each node above a non-empty node and set the
content to empty below every non-directory node. Furthermore, for each
constructor command in $A\cup B$ on node $n$ set every node above $n$ not
mentioned in $A\cup B$ to a directory. Similarly, for each destructor
command in $A\cup B$ on node $m$ set every node below $m$ not mentioned in
$A\cup B$ to empty. Observe that values at nodes in $A\cup B$ did not change
due to b) and c). These assignments produce a valid filesystem which
satisfies the conditions of Claim \ref{claim:1}. \end{proof}

\begin{proposition}\label{prop:refluent}
If the canonical sets $A_i: i\le k$ are pairwise refluent, then they are
jointly refluent.
\end{proposition}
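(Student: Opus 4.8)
The plan is to reuse the filesystem-construction from the sufficiency direction of Claim~\ref{claim:2}, but carried out for the union $U=\bigcup_{i\le k}A_i$ of all $k$ sets rather than for a union of two. Pairwise refluence gives, through Claim~\ref{claim:2}, that conditions~(a)--(c) of that claim hold for \emph{every} pair $A_i,A_j$; the key observation is that these conditions localize, because any two commands of $U$ that sit on comparable nodes already belong to a single pair $A_i\cup A_j$ (and if they happen to lie in the same $A_i$, then $\orel$-connectedness of $A_i$ does the job). In particular, for any node $n$ mentioned in $U$, all commands of $U$ acting on $n$ have one and the same input value: two such commands must come from different $A_i$'s, and Claim~\ref{claim:2}(a) applies. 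Write $v_n$ for this common value.

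Next I would define the target filesystem $\FS$ exactly as in the proof of Claim~\ref{claim:2}: put $\FS(n)=v_n$ at every node mentioned in $U$; for each constructor of $U$ on a node $n$, set $\FS(n')=\D$ at every $n'\prec n$ not mentioned in $U$; for each destructor of $U$ on a node $m$, set $\FS(m')=\E$ at every $m'\succ m$ not mentioned in $U$; propagate directory values upward above every non-empty node and empty values downward below every non-directory node; and place $\E$ on all remaining nodes. Then I would check, following the argument of Claim~\ref{claim:2} almost verbatim, that this assignment is well defined (no propagation step overrides a value $v_n$ already fixed at a mentioned node) and that the resulting $\FS$ has the tree property.

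To finish, I would apply Claim~\ref{claim:1} to each $A_i$ separately. Condition~(a) there holds by construction. For condition~(c), let $\sigma\in A_i$ be a constructor on a node $n$ and let $n'\prec n$ not be mentioned in $A_i$: if $n'$ is not mentioned in $U$ at all, then $\FS(n')=\D$ by construction, and if $n'$ is mentioned in $U$, say by a command of $A_j$ with $j\neq i$, then refluence of the pair $A_i,A_j$ produces a filesystem on which both work, and Claim~\ref{claim:1}(c) applied there forces the input of that command --- hence $v_{n'}=\FS(n')$ --- to equal $\D$. Condition~(b) is symmetric, using destructors. Since every $A_i$ is thereby applicable to the single filesystem $\FS$, the sets $A_1,\dots,A_k$ are jointly refluent.

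The step I expect to be the real obstacle is the well-definedness and tree-property check for $\FS$: one has to exclude a node being assigned two incompatible values, the worst case being an \emph{unmentioned} node lying strictly between a destructor of $U$ (on an ancestor) and a constructor of $U$ (on a descendant), which would be forced simultaneously to $\E$ and to $\D$. This is ruled out precisely because every node strictly between two comparable commands of $U$ must itself carry a command of $U$ --- within one $A_i$ by $\orel$-connectedness, across two sets by Claim~\ref{claim:2}(b) --- so there are no unmentioned gaps between comparable commands. Once this ``gap-freeness'' is in place, the remaining consistency cases are exactly those already handled for two sets in the proof of Claim~\ref{claim:2}.
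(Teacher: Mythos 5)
Your proposal is correct and follows essentially the same route as the paper's proof: construct a single filesystem by assigning the common input values at mentioned nodes, propagate directory/empty values to unmentioned nodes above constructors and below destructors, and reduce any potential conflict in this assignment to a failure of pairwise refluence via Claims~\ref{claim:1} and~\ref{claim:2}. The paper states this more tersely, but your explicit ``gap-freeness'' observation is exactly the consistency check the paper's last two sentences implicitly rely on.
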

\begin{proof}
Mimicking the proof of Claim \ref{claim:2}, construct the filesystem $\FS$
as follows. Start with all empty nodes. For each command in $A_i$, set the
value at the node of the command to its input value. Each node gets the
same value as the sets $A_i$ are pairwise refluent. For the same reason, if
a node gets a non-empty value, then all nodes above it can be set to be a
directory. Next, if $\sigma\in A_i$ is a destructor and $n'$ below $n$ is
not mentioned in $A_i$ but $n'$ is not empty, then it is set by some $A_j$,
and then $A_i$ and $A_j$ are not refluent. Finally, if $\sigma\in A_i$ is a
constructor, $n'$ is above $n$ not mentioned in $A_i$, then $n'$ should be
set to a directory. If it cannot be done because either $\FS(n')$ has been
set to a different value, or some node above $n'$ has been set to a
non-directory, then again we get an $A_j$ such that $A_i$ and $A_j$ are not
refluent.
\end{proof}

\begin{claim}\label{claim:forward}
Suppose the canonical sets $A$ and $B$ are refluent. Then $A\cap B$ is an
initial segment of $A$.
In particular, $A \equiv (A\cap B)\circ (A\setminus B)$.
\end{claim}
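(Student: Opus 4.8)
The plan is to show that $C:=A\cap B$ is \emph{downward closed} in $A$ with respect to $\orel$, meaning that whenever $\tau\in C$ and $\sigma\in A$ with $\sigma\orel\tau$, in fact $\sigma\in B$ (and hence $\sigma\in C$). Granting this, $C$ is an initial segment of $A$ in the sense of Definition \ref{def:iseg}. Indeed, as noted after Definition \ref{def:exec-order}, an $\orel$-chain joining two commands of a canonical set on comparable nodes passes through every intermediate node and consists either entirely of constructors or entirely of destructors; from this and downward-closedness one checks that both $C$ and $A\setm C=A\setm B$ are $\orel$-connected, hence canonical by Theorem \ref{thm:basic}. Concatenating an $\orel$-honoring order of $C$ with one of $A\setm B$ then yields a sequence whose command set is $A$ and which still honors $\orel$ (no command of $A\setm B$ may $\orel$-precede a command of $C$, by downward-closedness), so by Theorem \ref{thm:basic}(a),(d) it is semantically equivalent to $A$. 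That is exactly $A\equiv C\circ(A\setm B)$, i.e.\ $C\subc A$, which also gives the ``in particular'' statement since $A\setm C=A\setm B$.

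So everything reduces to the downward-closedness claim. Fix $\sigma\in A$ and $\tau\in A\cap B$ with $\sigma\orel\tau$, and let $\FS_0$ be a filesystem on which both $A$ and $B$ work, which exists because $A,B$ are refluent. By Definition \ref{def:exec-order} there are two cases. If $\sigma,\tau$ are constructors, then $\sigma$ is on $\parent n$ with output $\D$, and $\tau\in B$ is on $n$ with input $\E$; applying Claim \ref{claim:1}(c) to $B$ on $\FS_0$, if $\parent n$ were not mentioned in $B$ we would get $\FS_0(\parent n)=\D$, contradicting that $\sigma\in A$ acts on $\FS_0$ with precondition $\FS_0(\parent n)=x$ where $\type(x)<\D$. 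Hence $B$ has a command $\sigma'$ on $\parent n$. If instead $\sigma,\tau$ are destructors, then $\sigma$ is on $n$ with input of type $\F$ or $\D$ and $\tau\in B$ is on $\parent n$ with input $\D$; Claim \ref{claim:1}(b) applied to $B$ on $\FS_0$ forces $n$ to be mentioned in $B$, say by a command $\sigma'$, since otherwise $\FS_0(n)=\E$ contradicts the precondition of $\sigma$.

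It remains to identify $\sigma'$ with $\sigma$. In both cases $\sigma'$ and $\tau$ lie on parent--child nodes, so, $B$ being $\orel$-connected, they are $\orel$-related. The input value of $\tau$ (which is $\E$ in the constructor case and $\D$ in the destructor case) rules out one of the two directions of Definition \ref{def:exec-order}, forcing $\sigma'\orel\tau$ in each case. Matching the patterns of Definition \ref{def:exec-order} then pins down the output of $\sigma'$ ($\D$ in the first case, $\E$ in the second), while the input of $\sigma'$ equals that of $\sigma$ because both commands have $\FS_0$'s value at their common node as precondition (equivalently, Claim \ref{claim:2}(a)). Since a command is determined by its node, input and output, $\sigma'=\sigma$, so $\sigma\in B$, completing the argument. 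The main obstacle is exactly this last identification: knowing only that $B$ touches the node of $\sigma$ is not enough, and one must exploit the $\orel$-direction forced inside $B$ by the input value of $\tau$ to recover the output value of $\sigma$ as well.
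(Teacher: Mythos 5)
Your proof is correct and rests on the same core observation as the paper's: that $A\cap B$ is downward closed under $\orel$ within $A$, i.e.\ any $\orel$-predecessor in $A$ of a command in $A\cap B$ must itself lie in $B$. The paper establishes this in one line by invoking Claim~\ref{claim:2} and then peels off minimal elements recursively, whereas you derive it directly from the applicability conditions of Claim~\ref{claim:1}, carefully spell out the identification step that the paper elides (matching the command of $B$ found at the relevant node with $\sigma$ itself via the forced $\orel$-direction and Claim~\ref{claim:2}(a)), and finish by explicitly concatenating $\orel$-honoring orderings of $C$ and $A\setm C$ --- a more detailed, self-contained rendering of the same argument.
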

\begin{proof}
Suppose $A\cap B$ is not empty and let $\sigma$ be one of the common
commands. By Claim \ref{claim:2}, if $\tau\in A$ and $\tau\orel\sigma$, then
$\tau$ must be in $B$ as well. Thus $A\cap B$ contains a command which is an
initial segment both in $A$ and in $B$. Delete this command from $A$ and $B$
and apply this Claim recursively to the remaining commands.
\end{proof}

Algorithm \ref{alg:refluent} below checks whether the collection $\{A_i:
i\le k\}$ of canonical sets are jointly refluent using the characterization
proved in Theorem \ref{claim:check-refluent}. For stating the theorem,
define, for any node $n\in\N$ and for $i\le k$, the index set $I_n$ as
$$
  I_n=\{i:{}\mbox{there is a command in $A_i$ on node }n\}.
$$
Assuming further that all commands in $\bigcup_i A_i$ on node $n$ have the
same input value, this common value is denoted by $x(n)$.

\begin{theorem}\label{claim:check-refluent}
The canonical sets $A_i$ are jointly refluent if and only if the following 
conditions hold: 
\begin{itemz}[1pt]
\item[a)] all commands on node $n$ have the same input value;
\item[b)] if $m$ is above $n$ and neither $I_n$ nor $I_m$ are empty, then
$I_{\parent n}$ is non-empty as well;
\item[c)] if $x(\parent n)\neq\D$, then $I_n\subseteq I_{\parent n}$;
\item[d)] if $x(n)\neq \E$, then $I_{\parent n}\subseteq I_n$.
\end{itemz}
\end{theorem}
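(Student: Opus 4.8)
The plan is to reduce everything to the two‑set case: by Proposition \ref{prop:refluent} the $A_i$ are jointly refluent exactly when they are pairwise refluent, and pairwise refluence of $A_i,A_j$ is characterised by conditions (a)--(c) of Claim \ref{claim:2} applied with $A=A_i$, $B=A_j$. So the whole proof is a translation back and forth between the pairwise conditions of Claim \ref{claim:2} and the global conditions (a)--(d) via the index sets $I_n$, after first establishing (a) so that $x(n)$ is well defined whenever $I_n\neq\emptyset$.

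For \textbf{necessity}, assume joint (hence pairwise) refluence. Condition (a) of the theorem is Claim \ref{claim:2}(a) read over all pairs. For (b), take $i\in I_n$ and $j\in I_m$ with $m$ strictly above $n$; if $i=j$ the single canonical set $A_i$ already has a command on every node between $m$ and $n$ by $\orel$-connectedness, and if $i\neq j$ the same follows from Claim \ref{claim:2}(b) for the pair $A_i,A_j$ — in either case $\parent n$ is such a node, so $I_{\parent n}\neq\emptyset$. For (c), suppose $x(\parent n)\neq\D$ but some $i\in I_n\setminus I_{\parent n}$; pick $j\in I_{\parent n}$ (necessarily $j\neq i$), so $A_i$ mentions $n$ but not $\parent n$ while $A_j$ carries a command on $\parent n$; Claim \ref{claim:2}(c) then forces the input of that command, which by (a) is $x(\parent n)$, to be $\D$ — a contradiction, so $I_n\subseteq I_{\parent n}$. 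Condition (d) is the mirror image using the second clause of Claim \ref{claim:2}(c).

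For \textbf{sufficiency}, assume (a)--(d); by Proposition \ref{prop:refluent} it suffices to verify Claim \ref{claim:2}(a)--(c) for each pair $A_i,A_j$. Part (a) of Claim \ref{claim:2} is theorem (a) restricted to $A_i\cup A_j$. For part (b), given commands of $A_i\cup A_j$ on comparable nodes $p\prec q$, both $I_p$ and $I_q$ are nonempty, so iterating theorem (b) gives $I_{\parent q}\neq\emptyset$, then $I_{\parent\parent q}\neq\emptyset$, and so on down to $p$ — i.e.\ every node between $p$ and $q$ is mentioned. For part (c), if one of the two sets mentions $n$ but not $\parent n$, the witnessing index lies in $I_n\setminus I_{\parent n}$, so theorem (c) forbids $x(\parent n)\neq\D$, whence $x(\parent n)=\D$ is the input of any command on $\parent n$; symmetrically, if a set mentions $\parent n$ but not $n$, theorem (d) gives $x(n)=\E$. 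Thus every pair is refluent and joint refluence follows.

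I expect the only mildly delicate point to be the two ``connecting chain'' arguments — extracting $\parent n$ from Claim \ref{claim:2}(b) in the necessity direction, and, dually, showing by repeated application of condition (b) that all intermediate nodes between two mentioned comparable nodes are themselves mentioned. Everything else is routine index‑set bookkeeping, with the only convention worth stating explicitly being that (c) and (d) are read as vacuous when $I_{\parent n}$, respectively $I_n$, is empty, so that $x(\cdot)$ is always defined where it is referenced.
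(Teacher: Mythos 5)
Your global strategy matches the paper's: both reduce joint refluence to pairwise refluence via Proposition \ref{prop:refluent} and translate pairwise refluence through Claim \ref{claim:2}. Your necessity direction is correct (and even a bit more explicit than the paper's in handling the case $i=j$ by $\orel$-connectedness of a single canonical set, and in deriving (c), (d) from Claim \ref{claim:2}(c) rather than from a direct filesystem argument).

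There is, however, a genuine gap in your sufficiency argument, precisely at Claim \ref{claim:2}(b). You write that iterating condition (b) of the theorem shows every node $n'$ between $p$ and $q$ ``is mentioned.'' But condition (b) only yields $I_{n'}\neq\emptyset$, i.e., \emph{some} $A_k$ has a command on $n'$; Claim \ref{claim:2}(b) for the \emph{pair} $A_i,A_j$ demands a command in $A_i\cup A_j$ specifically, and $k$ may be a third index. To close this gap you need an extra idea that does not come from condition (b) at all. The paper first proves a monotonicity lemma from (c) and (d): if $I_{\parent n}$ and $I_n$ are both nonempty then $\type x(\parent n)\ge\type x(n)$. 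Starting from $i\in I_m$ (top) and $j\in I_n$ (bottom), condition (d) propagates $i$ downward through every node whose input value is not $\E$, and condition (c) propagates $j$ upward through every node whose input value is not $\D$; by the monotonicity of input types along the path, either the two ranges overlap at a node where both $A_i$ and $A_j$ have a command, or they meet at a parent--child pair with input values $\D$ and $\E$ respectively. In either case $A_i\cup A_j$ covers the whole path, which is what Claim \ref{claim:2}(b) needs (and the second case also directly supplies Claim \ref{claim:2}(c)). Without this propagation argument the sufficiency direction is not established.

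Minor remark: you should state explicitly that in the final step of sufficiency you actually need all of Claim \ref{claim:2}(a)--(c) for \emph{every} pair, and then invoke Proposition \ref{prop:refluent}; you say this, but the sentence ``thus every pair is refluent'' comes before you have legitimately verified part (b) for an arbitrary pair.
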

\begin{proof}
Let us remark that condition b) is equivalent to requesting that if $n$ and
$m$ are comparable, neither $I_m$ nor $I_n$ are empty, then $I_{n'}$ is not
empty for nodes between $n$ and $m$.

To check that the conditions are necessary, let $\FS$ be a filesystem on
which all $A_i$ work. Then $\FS(n)=x(n)$ for all nodes mentioned in
$\bigcup_i A_i$, giving condition a). Condition b) follows from part b) of
Claim \ref{claim:2} applied to the refluent sets $A_i$ and $A_j$ where $i\in
I_n$ and $j\in I_m$. To check c), assume $x(\parent n)\neq D$. Then
$\FS(n)=\E$, and so if $A_i$ has a command on $n$ (that is, $i\in
I_n$), then $A_i$ changes $\FS(n)$ to a non-empty value, thus at the end the
filesystem must have a directory at $\parent n$. As $A_i$ does not break
$\FS$ and $\FS(\parent n)$ is not a directory, it must contain a command on
$\parent n$, thus $i\in I_{\parent n}$, as
required by c).
Similarly, if $x(n)\neq \E$ (in which case $\FS(\parent n)=\D$) and $A_i$
has a command on $\parent n$, (that is, $A_i$ sets $\FS(\parent n)$ to be a
non-directory), then $A_i$ must also set $\FS(n)$ to be empty, thus $i\in
I_n$, as required by d).

For the reverse implication it suffices to show that Claim \ref{claim:2} is
true for every pair $A_i$ and $A_j$, and then apply Proposition
\ref{prop:refluent}. Condition a) of Claim \ref{claim:2} is immediate from
a). For the rest we first remark that if $m$ is the parent of $n$ and none
of $I_m$ and $I_n$ are empty, then $\type x(m) \ge \type x(n)$. Indeed, if
$x(m)\neq\D$, then $I_n\subseteq I_m$ by c), thus there is a canonical $A_k$
which has commands on both $n$ and $m$, consequently we must have $x(n)=\E$.
Similarly, if $x(n)\neq\E$ then $I_m\subseteq I_n$, which implies similarly
that $x(m)=\D$.

Returning to checking conditions in Claim \ref{claim:2} for sets $A_i$ and
$A_j$, suppose $i\in I_m$ and $j\in I_n$ and $m$ is above $n$. Consider the
path between $m$ and $n$. By condition b) there are commands on every node
between $m$ and $n$, and by the previous paragraph the input types on these
nodes are non-increasing. If the next node below $m$ on the $m\,$---$\,n$
path is not $\E$, then d) gives that $A_i$ also has a command on that node,
too. Similarly, if the node immediately above $n$ is not $\D$, then by c)
$A_j$ has a command on that node. Consequently either there is a node
between $m$ and $n$ on which both $A_i$ and $A_j$ have a command, or
otherwise there is a command from $A_i$ and a command from $A_j$ on
parent--child nodes such that the former has input value $\D$ (as it is not
in $A_j$), and the latter has input value $\E$ (as it is not in $A_i$). In
all cases conditions b) and c) of Claim \ref{claim:2} hold, as required.
\end{proof}

Based on this characterization the following algorithm checks, in near
linear time, whether the canonical command sets $A_i$ for $i\le k$ are
refluent. The algorithm assumes that the sets $A_i$ are canonical.

\begin{algorithm}[Checking if canonical sets are refluent, Code \ref{code:6}]\label{alg:refluent}

Create a lexicographically sorted list of the commands in $\bigcup_i A_i$.
Using Algorithm \ref{alg:add-up} add $\up$ pointers, both for the commands
and nodes, meaning that the $\up$ pointer of command $\sigma\in A_i$ points
to the command in $A_i$ which is directly above $\sigma$ (if there is such a
command in $A_i$), while the $\up$ pointer at node $n$ points to the parent
of $n$ if there is any node above $n$ in the node set of $\bigcup_i A_i$.
Fill in the bitmaps $I_n$ stored at node $n$ according to which sets the
commands belong to.

All four conditions of Theorem \ref{claim:check-refluent} can be checked by
looping through the commands in lexicographic order. For condition b) note
that the precondition implies that the up pointer is filled in at $n$, and
it is enough to check that it points to $\parent n$.
Condition c) does not need to be checked where no up pointer points to
$\parent n$ as then all index sets below it are empty.
\end{algorithm}

The total processing time after
sorting is clearly linear if bitmap operations can be implemented in
constant time. Otherwise $I_n\subseteq I_{\parent n}$ can be checked in time
proportional to $|I_n|$ (which still gives a linear total time) by following
the command $\up$ links at node $n$. Checking $I_{\parent n}\subseteq I_n$
can be done by counting the number of command $\up$ links at node $n$ and
comparing it to the total number of elements in $I_{\parent n}$.

\begin{pseudocode}{%
Given a set of canonical command sets, determine if they are jointly
refluent.\label{code:6}}
\State sequence $\gets$ union of commands in the command sets
\LComment{ sort {\upshape sequence} lexicographically by the nodes of the commands}
\LComment{ add $\up$ pointers using Algorithm \ref{alg:add-up}}
\For{$i$ from $1$ to number of sets } \Comment{determine the sets $I_n$}
    \ForEach{ command \textbf{in} sets[$i$] }
      \State  command$\pt$node$\pt$index$\gets$%
              \hbox{command$\pt$node$\pt$index$\,\cup\{i\}$}
    \EndFor
\EndFor
\ForEach{ command \textbf{in} sequence }
   \If{ not the first command \textbf{and}\\
     \hbox{\quad}previousCommand$\pt$node $=$ command$\pt$node \textbf{and}\\ 
     \hbox{\quad}previousCommand$\pt$input $\neq$ command$\pt$input}
       \State \textbf{return false} \Comment{not refluent, condition a)}
   \EndIf
    \If{command$\pt$up${}\neq\bot$}
        \If{command$\pt$up$\pt$node is not the parent of\\
           \hbox{\quad} command$\pt$node }
\State      \textbf{return false} \Comment{not refluent, condition b)}
        \EndIf
      \If{command$\pt$up$\pt$input${}\neq\D$} \Comment{not a directory}
            \If{command$\pt$node$\pt$index is not a subset of\\
            \hbox{\quad} command$\pt$up$\pt$node$\pt$index}
\State        \textbf{return false} \Comment{not refluent, condition c)}
            \EndIf
        \EndIf
       \If{command$\pt$input${}\neq\E$} \Comment{not empty}
        \If{command$\pt$node$\pt$index is not a superset of\\
             \hbox{\quad} command$\pt$up$\pt$node$\pt$index}
\State       \textbf{return false} \Comment{not refluent, condition d)}
        \EndIf
    \EndIf
    \EndIf
\EndFor
\State \textbf{return true} \Comment{the command sets are refluent}
\end{pseudocode}

\subsection{Mergers by conflict resolution}

This section presents a proof of the claim that the mergers are exactly the
maximal conflict-free subsets. Recall from Definition \ref{def:conflict}
that two different commands in $\bigcup_i A_i$ are in conflict if either
a) they are on the same node, or b) they are on comparable nodes, the node
on the higher node creates a non-directory and the command on the lower node
creates a non-empty content.

\begin{proposition}\label{prop:conflict1}
There are no conflicts in a canonical set.
\end{proposition}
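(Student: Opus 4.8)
The plan is to show directly that neither type of conflict from Definition \ref{def:conflict} can occur between two commands of a canonical set $A$.

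\textbf{Setup and case a).} Let $A$ be canonical and suppose for contradiction that $\sigma,\tau\in A$ are in conflict. If they are in conflict by clause a) of Definition \ref{def:conflict}, they are two \emph{different} commands on the \emph{same} node. But the second condition of Definition \ref{def:canonical} says $A$ contains at most one command on each node---an immediate contradiction. So this case is dispatched in one line.

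\textbf{Case b).} Now suppose $\sigma=\(m,x,y)$ and $\tau=\(n,x',y')$ with $m$ strictly above $n$, where $\sigma$ creates a non-directory (i.e. $\type(y)<\D$, so $y\in\{\E\}\cup\F$) and $\tau$ creates non-empty content (i.e. $\type(y')>\E$, so $y'\in\F\cup\{\D\}$). Since $A$ is $\orel$-connected (third condition of Definition \ref{def:canonical}), there is an $\orel$-chain in $A$ joining $\sigma$ and $\tau$; in particular $A$ has a command on every node between $m$ and $n$, and consecutive commands along the path are $\orel$-related. The key observation is what $\orel$ forces on output values along a parent--child step. By Definition \ref{def:exec-order}, if $\rho\orel\rho'$ on parent--child nodes then either both are destructors with the child's output $\E$ and the parent's output in $\{\D,\F,\E\}$ matching pattern $\(\parent\cdot,\D,\F\E)$, or both are constructors matching $\(\parent\cdot,\E\F,\D)\orel\(\cdot,\E,\F\D)$. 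Walking down the chain from $m$ to $n$: I would argue that the output type is monotone along the chain in a way incompatible with having a non-directory output at the top ($m$) and a non-empty output at the bottom ($n$). Concretely, if the command on $m$ has output of type $<\D$, then along the chain each step is a destructor step (the first pattern), which forces every command below $m$---including the one on $n$---to have output $\E$; this contradicts $\type(y')>\E$. Symmetrically, if the command on $n$ has output $>\E$, the chain must consist of constructor steps, forcing the command on $m$ to have output $\D$, contradicting $\type(y)<\D$. Either way we reach a contradiction.

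\textbf{Main obstacle.} The delicate point is making the ``monotonicity along the $\orel$-chain'' argument airtight: I need to confirm that an $\orel$-chain connecting two comparable nodes cannot switch between constructor steps and destructor steps midway. This should follow from the remark after Definition \ref{def:exec-order} (if $\sigma\orel\tau$ then both are constructors or both are destructors) together with the fact that on a single node $A$ has only one command, so the output value of the command on an intermediate node feeds into the input constraint of the next step---and a constructor step going \emph{down} requires input $\E$ at the child while a destructor step going down requires input of type $\ge\F$, so the types cannot be reconciled if the chain tries to mix them. Once that rigidity is established, the conclusion is routine: the top command's output type being $<\D$ is incompatible with a constructor chain, and the bottom command's output type being $>\E$ is incompatible with a destructor chain, so no chain of either kind can realize the claimed conflict, and hence no conflict exists in $A$.
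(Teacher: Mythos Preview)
Your proposal is correct and follows essentially the same approach as the paper's own proof: dispatch case a) by the one-command-per-node property, then for case b) use the $\orel$-chain guaranteed by $\orel$-connectedness and observe that such a chain is uniformly constructor or uniformly destructor, forcing either the top command to output $\D$ or the bottom command to output $\E$. The paper compresses this into two sentences and simply asserts ``either $\sigma\orel\cdots\orel\tau$, or $\tau\orel\cdots\orel\sigma$,'' whereas you explicitly work through the chain-rigidity argument (your ``main obstacle''); your extra care there is justified and the reasoning you sketch---that a switch would impose incompatible input/output constraints on the shared intermediate command---is sound, though your phrase about outputs ``feeding into'' inputs is slightly loose since the constraints come from the $\orel$ patterns on both sides of a single command rather than from any sequential execution.
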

\begin{proof}
A canonical set contains at most one command on each node, so assume
$\sigma$ and $\tau$ are on comparable nodes. Then there is an $\orel$-chain
between them (see Definition \ref{def:canonical}), thus either
$\sigma\orel\cdots\orel \tau$, or $\tau\orel\cdots\orel\sigma$. In both
cases either the command on the higher node creates a directory, or the
command on the lower node creates an empty content.
\end{proof}

\begin{theorem}\label{prop:conflict2}
Suppose the command sets $A_i$ are jointly refluent. $M\subseteq \bigcup_i A_i$
is a merger if and only if $M$ is maximal without conflicts.
\end{theorem}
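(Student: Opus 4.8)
The plan is to prove the two implications separately, using the characterization of canonical sets (Definition \ref{def:canonical}) together with the refluence conditions of Claim \ref{claim:2} and the extension property (Proposition \ref{prop:extend}).

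First I would show that every merger $M$ is maximal without conflicts. By Definition \ref{def:merger}, $M$ is a canonical subset of $\bigcup_i A_i$, so by Proposition \ref{prop:conflict1} it has no conflicts. For maximality, suppose $\sigma\in\bigcup_i A_i\setminus M$ could be added to $M$ while keeping it conflict-free; I must show this contradicts the maximality of $M$ as a \emph{canonical} set, i.e.\ that $M\cup\{\sigma\}$ is in fact canonical. The three conditions of Definition \ref{def:canonical} must be verified: no null commands (immediate, as $\sigma$ comes from a canonical $A_i$), at most one command per node (this is exactly the absence of a type-a) conflict with any command of $M$ on the same node), and $\orel$-connectedness. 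The last point is the crux: I need to show that if $\sigma$ and some $\tau\in M$ are on comparable nodes, there is an $\orel$-chain between them inside $M\cup\{\sigma\}$. Here I would use joint refluence of the $A_i$ via Claim \ref{claim:2}\,b): between the nodes of $\sigma$ and $\tau$ there are commands (in $\bigcup_i A_i$), and along the path the input types are monotone (as established in the proof of Theorem \ref{claim:check-refluent}); maximality of $M$ forces these intermediate commands already to be in $M$ (else $M$ could be extended — but one must argue extension keeps canonicity, which is where the absence of conflicts along the chain is used). Once all intermediate commands lie in $M$, the absence of a type-b) conflict between consecutive comparable commands is precisely what guarantees each consecutive pair is $\orel$-related rather than mutually breaking, yielding the chain. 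Hence $M\cup\{\sigma\}$ is canonical, contradicting maximality of $M$.

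Conversely, suppose $M\subseteq\bigcup_i A_i$ is maximal without conflicts; I would show $M$ is a merger, i.e.\ a maximal canonical subset. First, $M$ is canonical: no null commands and at most one command per node follow from the absence of type-a) conflicts; $\orel$-connectedness requires showing that for comparable $\sigma,\tau\in M$ all intermediate nodes carry commands of $M$ and consecutive comparable commands are $\orel$-related. The latter again follows because a comparable pair that is not $\orel$-related would either be on the same node (excluded) or, by the structural analysis recalled before Definition \ref{def:canonical}, have the upper command create a non-directory and the lower create non-empty content — exactly a type-b) conflict, excluded. The former — that intermediate nodes are occupied — is the delicate part: I would argue that if some node $n'$ strictly between the nodes of $\sigma$ and $\tau$ carried no command of $M$, then by joint refluence (Claim \ref{claim:2}\,b) applied to the relevant $A_i, A_j$) there is a command on $n'$ in $\bigcup_i A_i$, and — using the monotonicity of input types along the path and conditions c), d) of Theorem \ref{claim:check-refluent} — one can select such a command that conflicts with nothing in $M$, contradicting maximality. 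Finally, $M$ being canonical and not extendable to a larger canonical subset follows from Proposition \ref{prop:conflict1}: any strictly larger canonical subset would be conflict-free, contradicting maximality of $M$ among conflict-free subsets. Combining the two directions with Proposition \ref{prop:extend} (a non-maximal canonical subset of $\bigcup_i A_i$ extends to one, hence to a merger) closes the argument.

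The main obstacle I anticipate is the ``filling in the gaps'' step in both directions: translating the local, pairwise conditions of Claim \ref{claim:2} and Theorem \ref{claim:check-refluent} into the statement that the maximal conflict-free set already contains \emph{all} the bridging commands needed for $\orel$-connectedness. The delicate point is ensuring that the bridging command one wants to add is genuinely conflict-free with \emph{everything} currently in $M$, not merely with its comparable neighbours on the path — this uses that conflicts only arise between comparable nodes and that a command on a node between two nodes of $M$ is comparable only to commands of $M$ that are themselves comparable to it, whose behaviour on the path is already pinned down by the refluence conditions. I would isolate this as a lemma of the form ``if $M$ is conflict-free and maximal, $n'$ lies between two nodes carrying commands of $M$, then $M$ contains a command on $n'$'' and prove it by the monotone-type argument sketched above.
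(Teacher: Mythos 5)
Your isolated lemma---that a maximal conflict-free $M$ must contain a command on every node lying strictly between two nodes it already occupies---is exactly the load-bearing step in the paper's proof, so you have found the right pivot. But the way you organize the two implications around it introduces two problems.

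First, your forward direction is circular. You want to show that if $M$ is a merger and $M\cup\{\sigma\}$ is conflict-free, then $M\cup\{\sigma\}$ is already canonical. That need not hold (there can be a gap between $\sigma$'s node and those of $M$), and you patch it by claiming maximality of $M$ forces the intermediate commands into $M$---but showing that such an intermediate $\tau'$ can be added to $M$ while keeping $M$ canonical is the very kind of claim you are in the middle of establishing. Invoking Proposition~\ref{prop:extend} at the end is a second circularity: in the paper that proposition is \emph{derived from} this theorem. The paper avoids both issues by reducing everything to a single statement, \emph{every maximal conflict-free subset of $\bigcup_i A_i$ is canonical}, after which the equivalence is pure lattice bookkeeping via Proposition~\ref{prop:conflict1}: a merger $M$ is conflict-free, and if it were not maximal conflict-free you could extend past it to a maximal conflict-free (hence canonical) proper superset, contradicting $M$'s maximality as a canonical set; conversely a maximal conflict-free set is canonical by the lemma and maximal canonical because any larger canonical set would again be conflict-free.

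Second, the ``type monotonicity'' heuristic for the lemma needs to be made concrete. Take $\sigma,\tau\in M$ on $n$ above $m$ with no command of $M$ strictly between. Not being in conflict forces: either $\sigma$ creates a directory, or $\tau$ creates $\E$. In the first case a common filesystem $\FS$ for $A_i\ni\sigma$ and $A_j\ni\tau$ is empty everywhere below $n$, so $\tau$ matches $\(m,\E,\F\D)$ and $A_j$ must contain the constructor $\tau'=\(\parent m,\E,\D)$. Then $M\cup\{\tau'\}$ is still conflict-free: $\tau'$ creates a directory so cannot be the upper member of a type-b pair, and if some $\sigma'\in M$ above $\parent m$ created a non-directory it would already conflict with $\tau$. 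The symmetric case, with $\tau$ creating $\E$ and $\tau'=\(\parent m,\D,\E)\in A_i$, is analogous. This explicit choice of the bridging command at $\parent m$, together with the check that it conflicts with nothing in $M$, is what actually delivers the maximality contradiction without circularity.
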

\begin{proof}

By Proposition \ref{prop:conflict1} a merger does not contain conflicts,
thus it suffices to show that a maximal conflict-free set $M$ is canonical.
$M$ contains at most one command on each node by condition a) of Definition
\ref{def:conflict}. Let $\sigma$, $\tau\in M$ be on nodes $n$, $m$,
respectively such that $n$ is above $m$. Let moreover $\sigma\in A_i$ and
$\tau\in A_j$. We want to show that there is a $\orel$-chain in $M$ between
$\sigma$ and $\tau$. We know that $\sigma$ and $\tau$ are not in conflict.

Consider first the case when $\sigma$ creates a directory, that is, it
matches $\(n,\E\F,\D)$. As $A_i$, $A_j$ are refluent, let $\FS$ be any
filesystem on which both $A_i$ and $A_j$ work. Since $\FS(n)$ is not a
directory, all nodes in $\FS$ below $n$ are empty, in particular, $\tau$
matches $\(m,\E,\F\D)$. As the canonical $A_j$ does not break $\FS$, $A_j$
must contain commands on all nodes between $n$ and $m$, including $m$. If
$n$ is a parent of $m$, then $\tau\orel\sigma$, and we are done. If $n$ is
strictly above $m$, then we may assume that there are no commands in $M$ on
nodes between $n$ and $m$, thus no command on $\parent m$ either. But $A_j$
contains a command $\tau'=\(\parent m,\E,\D)$ (as $A_j$ does not break
$\FS)$, and $M\cup\{\tau'\}$ is conflict-free, contradicting the maximality
of $M$.

The second case is when $\tau$ creates an empty node, that is, it matches
$\(m,\D\F,\E)$. Similarly to the above, $\sigma$ matches $\(n,\D,\F\E)$, and
then either $\tau\orel\sigma$, or otherwise $A_i$ contains the command
$\tau'=\(\parent m,\D,\E)$ which can be added to $M$.
\end{proof}

\begin{proposition}\label{prop:merger}
Let $A_i$ be canonical sets, $M\subseteq \bigcup_i A_i$ be a merger. If
none of $A_i$ breaks $\FS$, then neither does $M$.
\end{proposition}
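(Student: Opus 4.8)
The plan is to use the characterization of applicability from Claim~\ref{claim:1} together with the fact (Proposition~\ref{prop:refluent}) that pairwise refluent canonical sets are jointly refluent. Since each $A_i$ is applicable to $\FS$, the sets $A_i$ are jointly refluent, so they satisfy conditions a)--d) of Theorem~\ref{claim:check-refluent}. The merger $M$ is a canonical subset of $\bigcup_i A_i$; we must verify that $M$ satisfies the three conditions of Claim~\ref{claim:1} with respect to $\FS$, namely a) the precondition of each $\sigma=\(n,x,y)\in M$ holds, i.e.\ $\FS(n)=x$; b) if $\sigma$ is a destructor then $\FS(n')=\E$ for every $n'$ below $n$ not mentioned in $M$; and c) if $\sigma$ is a constructor then $\FS(n')=\D$ for every $n'$ above $n$ not mentioned in $M$.

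First I would dispense with condition a): every command of $M$ comes from some $A_i$, and since $A_i$ is applicable to $\FS$, Claim~\ref{claim:1}a) applied to $A_i$ gives $\FS(n)=x$ for $\sigma=\(n,x,y)\in A_i$; hence the precondition holds for all of $M$. Next I would handle condition c) (condition b) being entirely symmetric via the inverse, or treated the same way). Let $\sigma=\(n,x,y)\in M$ be a constructor, say $\sigma\in A_i$, and let $n'$ be a node above $n$ not mentioned in $M$. I must show $\FS(n')=\D$. Since $A_i$ is applicable to $\FS$ and contains the constructor $\sigma$ on $n$, Claim~\ref{claim:1}c) for $A_i$ tells us $\FS(n'')=\D$ for every node $n''$ above $n$ not mentioned in $A_i$. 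So the only case needing work is when $n'$ \emph{is} mentioned in $A_i$, by some command $\tau'$, yet $n'$ is not in $M$. I would then argue: because $A_i$ is applicable, the value $\FS(n')$ equals the input value $x(n')$ of $\tau'$; and because $\sigma$ is a constructor on $n$ with $n'\succ n$, and $A_i$ is canonical hence $\orel$-connected, there is an $\orel$-chain in $A_i$ from $\sigma$ up to $\tau'$, which forces every command on the path, in particular $\tau'$, to be a constructor with output type $\D$ (a constructor chain going upward ends in directory-creating commands). Actually the key point is the input value: I need $\FS(n')=\D$, so I need $x(n')=\D$. From the $\orel$-chain structure (or directly from Claim~\ref{claim:1}c) applied with a node just above the top of $M$'s chain), the input at $n'$ along such an upward constructor chain must already be $\D$ --- because a constructor matches $\(\bullet,\E,\F\D)$ or $\(\bullet,\F,\D)$, and for the chain to be $\orel$-connected the parent-side commands must create what the child expects; tracing this up, a node strictly above a constructor that is itself in a constructor chain has input $\D$. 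Hence $\FS(n')=x(n')=\D$ as needed.

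The main obstacle I anticipate is exactly this: making precise why a node $n'$ which lies above a constructor $\sigma\in M$, is mentioned in $A_i$ but not in $M$, must already hold the directory value in $\FS$. The clean way is probably to avoid reasoning about $A_i$'s internal chain and instead invoke Theorem~\ref{claim:check-refluent}: since the $A_i$ are jointly refluent, condition c) of that theorem says $x(\parent m)\neq\D \implies I_m\subseteq I_{\parent m}$; contrapositively, walking upward from $n$, as long as the input values are not yet $\D$ the index sets only grow, so every $A_j$ with a command on $n$ (in particular $A_i\ni\sigma$) has commands all the way up until a node with input $\D$ is reached --- and that node has $\FS$-value $\D$. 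Combined with Claim~\ref{claim:1}c) for $A_i$ to cover nodes not mentioned in $A_i$, this pins down $\FS(n')=\D$ for every $n'\succ n$ regardless of whether it is in $M$ or in $A_i$. Condition b) follows symmetrically using condition d) of Theorem~\ref{claim:check-refluent} ($x(m)\neq\E\implies I_{\parent m}\subseteq I_m$) together with Claim~\ref{claim:1}b). Having verified a), b), c) of Claim~\ref{claim:1}, we conclude $M$ is applicable to $\FS$, i.e.\ $M$ does not break $\FS$, which is the assertion of Proposition~\ref{prop:merger}.
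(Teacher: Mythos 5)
Your plan of verifying the conditions of Claim~\ref{claim:1} for $M$ directly is a sound framework, and condition a) and the sub-case where $n'$ is not mentioned in $A_i$ are handled correctly. However, the remaining sub-case --- $n'$ above the constructor $\sigma\in M\cap A_i$ mentioned in $A_i$ by some $\tau'$, yet $n'\notin M$ --- is where the argument breaks. You try to show $\FS(n')=\D$, and the claim you rely on (``a node strictly above a constructor in a constructor chain has input $\D$'') is simply false: by Definition~\ref{def:exec-order} the constructor $\orel$-rule is $\(\parent n,\E\F,\D)\orel\(n,\E,\F\D)$, so the command on the \emph{parent} side of an upward constructor chain has input type $\E$ or $\F$, never $\D$ (that is the whole point of it being a constructor chain --- the directories do not exist yet in $\FS$). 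So if $\tau'\in A_i$ really did lie on a node $n'$ above $n$, one would have $\FS(n')=x(n')\in\{\E,\F\}$, violating Claim~\ref{claim:1}c) --- not satisfying it. The appeal to Theorem~\ref{claim:check-refluent}c) does not rescue this: that condition only tells you where the index sets must grow, it does not make $x(n')$ equal to $\D$ on nodes that are actually mentioned.

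The deeper issue is that your argument never uses the \emph{maximality} of $M$, and the Proposition is false without it. For instance, if $A_1=\{\(\parent n,\E,\D),\,\(n,\E,\F)\}$ and $\FS$ is empty at both nodes, the canonical subset $\{\(n,\E,\F)\}$ breaks $\FS$ --- it fails Claim~\ref{claim:1}c) --- even though it is a canonical subset of $A_1$. What is actually true, and what the paper establishes, is that the problematic sub-case \emph{cannot occur}: if $\sigma\in M\cap A_i$ is a constructor on $n$ and $\tau'\in A_i$ sits on an ancestor $n'$ of $n$, then $\tau'$ must already lie in $M$. The paper proves this by contradiction: if $\tau'\notin M$, then by maximality of $M$ as a conflict-free set (Theorem~\ref{prop:conflict2}) some $\tau\in M$ conflicts with $\tau'$; since $M$ is $\orel$-connected and has no command on $n'$, the conflicting $\tau$ must be on the opposite side of $n'$ from $\sigma$, and Definition~\ref{def:conflict}b) then forces $\sigma$ and $\tau'$ to be in conflict, contradicting Proposition~\ref{prop:conflict1} for $A_i$. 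Once this is in place, every $n'$ above $n$ not in $M$ is automatically not in $A_i$ either, and Claim~\ref{claim:1}c) applied to $A_i$ finishes the job. So the missing ingredient in your write-up is exactly the use of maximality to eliminate the case you were trying (and could not) handle directly.
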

\begin{proof}

We use the conditions in Claim \ref{claim:1} to show that $M$ does not break
$\FS$. To this end let $\sigma=\(n,x,y)\in M$ so that $\sigma\in A_i$. Since
$A_i$ does not break $\FS$, condition a) follows. To check b) suppose
$\sigma$ is a destructor command, $n'$ is below $n$ and it is not mentioned
in $M$. If $n'$ is not mentioned in $A_i$ either, then condition b) holds as
$A_i$ is applicable to $\FS$. So suppose $\tau'\in A_i$ is on the node $n'$.
As $\tau'\notin M$, there must be a command $\tau\in M$ on node $m$ which is
in conflict with $\tau'$. Since $M$ has no command on $n'$ (but has a
command above $n'$), $m$ must be above $n'$. By Definition
\ref{def:conflict} $\tau'$ creates a non-empty content. Since
$\sigma$ and $\tau'$ are not in conflict (both are in the canonical set
$A_i$), $\sigma$ must create a directory. But this contradicts the
assumption that $\sigma$ is a destructor.

The case when $\sigma$ is a constructor and $n'$ is above $n$ is similar.
\end{proof}

\begin{proposition}\label{prop:extend}
Let $A_i$ be refluent canonical sets, $C\subseteq \bigcup_i A_i$ be
canonical. There is a merger extending $C$.
\end{proposition}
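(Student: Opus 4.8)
The plan is to extend $C$ one command at a time, at each stage maintaining a canonical set that is a subset of $\bigcup_i A_i$, until no further command can be added; the resulting set is then maximal canonical, hence a merger by Definition \ref{def:merger}. The only content of the proof is therefore Proposition \ref{prop:extend} itself combined with this greedy argument, so the real work is to prove Proposition \ref{prop:extend}: if $C\subseteq\bigcup_i A_i$ is canonical but not maximal, some $\sigma\in\bigcup_i A_i\setm C$ can be added so that $C\cup\{\sigma\}$ is still canonical.

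Here is how I would argue Proposition \ref{prop:extend}. Since $C$ is not maximal, pick any $\tau\in\bigcup_i A_i\setm C$ with $C\cup\{\tau\}$ canonical and we would be done immediately --- but the point is that maximality has to be tested against \emph{some} extension, so instead assume $C$ is a proper subset of $\bigcup_i A_i$ and produce a witness. Take $\sigma\in A_i\setm C$ for some $i$. If $C\cup\{\sigma\}$ is already canonical, we are done. Otherwise one of the three canonicity conditions of Definition \ref{def:canonical} fails: (i) $\sigma$ is a null command --- impossible since $\sigma\in A_i$ and $A_i$ is canonical; (ii) $C$ already contains a command $\sigma'$ on the node of $\sigma$ --- then by Claim \ref{claim:check-refluent}(a) applied to the jointly refluent sets, $\sigma$ and $\sigma'$ have the same input value, and I would replace the goal by finding a different command to add, or argue that on that node we simply cannot extend and move to another candidate; (iii) $C\cup\{\sigma\}$ fails $\orel$-connectedness. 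Case (iii) is the substantive one: there is a command in $C\cup\{\sigma\}$ on a node comparable to the node $n$ of $\sigma$, but some node $n'$ strictly between them carries no command. I would then use joint refluence via Theorem \ref{claim:check-refluent}: the index sets along the path from $n$ to the nearest $C$-command above (or below) are forced to be nonempty by condition (b), so there is a command $\rho\in\bigcup_i A_i$ on the highest such missing node $n'$ adjacent to an existing command; conditions (c) and (d) pin down its input type so that adjoining $\rho$ to $C$ keeps the new pair $\orel$-related. Thus $C\cup\{\rho\}$ is canonical, giving the required strict extension.

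The main obstacle is organizing case (iii) cleanly: I must choose the ``next'' missing node $n'$ to be adjacent to a node already carrying a $C$-command (take the missing node closest to that command), verify that the command on $n'$ supplied by joint refluence forms an $\orel$-pair with the neighbouring $C$-command rather than breaking $\orel$-connectedness somewhere else, and confirm that no new ``at most one command per node'' violation is introduced --- the latter is automatic since $n'$ was a missing node. Once this single-step extension is established, the greedy iteration is immediate: $\bigcup_i A_i$ is finite, each step strictly enlarges the canonical set while staying inside $\bigcup_i A_i$ and containing $C$, so the process terminates at a maximal canonical set $M\supseteq C$, which is the desired merger.
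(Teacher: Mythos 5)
Your plan is to grow a canonical set greedily, one command at a time, until no command can be added; the crux is then the one-step extension claim, namely that a canonical $C\subsetneq\bigcup_i A_i$ that is not \emph{maximal} canonical must admit a single command $\sigma$ with $C\cup\{\sigma\}$ still canonical. This is not the paper's route. The paper sidesteps one-step canonical extension entirely: by Proposition~\ref{prop:conflict1} the canonical set $C$ is conflict-free, and since ``conflict-free'' is closed under taking subsets, $C$ can be extended greedily to a \emph{maximal conflict-free} subset of $\bigcup_i A_i$ with no effort; Theorem~\ref{prop:conflict2} then says that a maximal conflict-free set is a merger. The whole proof is those three sentences, and the heavy lifting lives in Theorem~\ref{prop:conflict2}.

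Your version has a genuine gap. Case~(ii) is simply punted: ``replace the goal by finding a different command'' or ``move to another candidate'' is not an argument. More importantly, in case~(iii) you assert that conditions~(c) and~(d) of Theorem~\ref{claim:check-refluent} ``pin down'' the input type of the command $\rho$ supplied at the missing node so that $\rho$ is $\orel$-related to the adjacent $C$-command. That is not what those conditions say. They constrain the index sets $I_n$, not the output types of commands, and in particular they do not prevent the following situation: $C$ contains an edit $\tau=\(m,\F,\F')$ on $m$, and $\rho=\(n',\E,\F)$ sits on the child $n'=m$'s child. An edit matches no side of the $\orel$-relation, so $C\cup\{\rho\}$ fails $\orel$-connectedness, and nothing in your argument proposes an alternative. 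A correct extension in such a case has to avoid $n'$ entirely, or first swap $\tau$ for a different command on $m$, neither of which your scheme provides. Beyond these specific cases, the deeper issue is that ``canonical'' is not closed downward, so the greedy on canonical sets genuinely can get stuck at a non-maximal canonical set unless you prove it cannot --- and that proof is essentially the content you would be re-deriving. The paper's switch to the conflict-free characterization is precisely what makes the greedy trivial, because conflict-freeness \emph{is} closed downward, and the hard part (that the fixed point of that greedy is a merger) is delegated cleanly to Theorem~\ref{prop:conflict2}.
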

\begin{proof}
As commands in $C$ are not in conflict by Proposition \ref{prop:conflict1},
$C$ can be extended to be a maximal conflict-free subset of $\bigcup_i A_i$.
But this set is a merger by Theorem \ref{prop:conflict2}.
\end{proof}

\section{Generating all mergers}\label{sec:all-mergers}

Algorithm \ref{alg:merger} in Section \ref{subsec:merger} cannot generate
all possible mergers in nondeterministic linear time. The modified algorithm
which creates a merger extending a given canonical subset $C$, however, can
be used for this purpose as follows. Pick a random subset $C$ of the
commands in $\bigcup_i A_i$ and check if $C$ is canonical using Algorithm
\ref{alg:canonical}. If yes, use the modified version of Algorithm
\ref{alg:merger} to create a merger extending $C$; otherwise use the
original version to create a merger.

While this algorithm clearly generates all mergers in nondeterministic
linear time, it is not satisfactory as it blindly guesses the final merger.
In this section we develop a more appealing approach by further exploiting
the structure of refluent canonical sets. Let us fix the jointly refluent
canonical sets $\{A_i:i\le k\}$, and consider all nodes mentioned in the
command set $\bigcup_i A_i$. Since the $A_i$ canonical sets are refluent, we
know that the input values of the commands on the same nodes are equal.

Observe that if there are any conflicts among the commands, then there is a
conflict of one or more of the following special types:
\begin{itemz}
\item[(1)] multiple different commands on the same node with a file input
value,

\item[(2)] a pair of commands matching $\(\parent n,\D,\E\F)$ and
$\(n,\E,\F\D)$,

\item[(3)] multiple different commands with an empty input value on the same
node,

\item[(4)] multiple different commands with a directory input value on the
same node.
\end{itemz}
We eliminate these conflicts in this order. First we consider conflicts of
type (1). They are necessarily on uncomparable nodes as in any filesystem
file nodes are on such nodes. Of the commands on the same node, we choose a
winner. If the winner is a destructor or an edit (matching $\(n,\F,\E\F)$),
we delete all commands below $n$ which create non-empty content. If the
winner is a constructor or an edit (matching $\(n,\F,\F\D)$), we delete all
destructor commands above $n$. Since these conflicts are on uncomparable
nodes, deletions triggered by one do not affect the conflicts on another.
Also, since all deleted commands are in conflict with the winner, an element
of the merger, we know that the merger will be maximal.

Next we consider conflicts of type (2). We mark all parent nodes with a
directory value that have a destructor command and which have a constructor
on an empty node on one of their children. We consider such parent nodes in
bottom-up order, and either keep the destructor command(s) on the parent, or
all the constructor commands on the children, without choosing a winner yet.
If the destructors are kept, we delete all commands below $\parent n$ which
create non-empty content. If the constructors are kept, we delete all
destructors on and above $\parent n$. Since the empty child nodes in these
conflicts are on uncomparable nodes, the deletions there are independent.
Deletions of commands creating non-empty content on other children can only
affect commands matching $\(n,\D,\F)$, which may be part of conflicts of
this type already resolved. However, since there is a destructor on $\parent
n$, there must be a $\(n,\D,\E)$ command on such children, so the resolution
of earlier conflicts are not affected by these deletions.

The deletions of destructors upwards are not independent, but as we proceed
in bottom-up order, they may resolve yet unresolved conflicts of type (2),
but will never interfere with conflicts already processed. We note that the
subsequent steps in the algorithm always choose a destructor or a
constructor command as the winner on a node if at this stage it has at least
one. This means that all commands deleted here are in conflict with a
command that will be part of the merger, ensuring its maximality.

Conflicts of type (3) are considered in a top-down order. We choose a single
winner command on each node. If the winner matches $\(n,\E,\F)$, then we
delete all constructor commands below $n$. As the deletions are downwards,
and we proceed top-down, they may resolve yet unresolved conflicts of type
(3), but will not interfere with winners already chosen. Also, deleted
commands are clearly in conflict with the winner.

Finally, conflicts of type (4) are processed in bottom-up order. We again
choose a single winner on each node. If it matches $\(n,\D,\F)$, then we
delete all destructors above $n$. It is again true that the deletions do not
affect winners already chosen, and that the maximality of the merger is
guaranteed.

Since we know that any conflict entails a conflict of one of the above
types, and as we have removed all such conflicts, we also know that the
merger constructed is not only maximal, but also conflict-free.

The algorithm sketched below realizes this idea, thus generates all mergers
in randomized near linear time. It assumes that the input command sets $A_i$
are canonical and refluent.
 
\begin{algorithm}[Generating all mergers]\label{alg:xxx}

Arrange the commands in $\bigcup_i A_i$ lexicographically and add the $\up$
pointers as in Algorithm \ref{alg:refluent}. Make several passes over the
commands dealing with conflicts (1)--(4) as indicated above. Each pass
handles commands either in top-down order of their nodes, or in the reverse
bottom-up order. Handling the command at node $n$ (which is the parent of
another node in case (2)) may result in deleting those commands at node $n$
which satisfy a certain property, deleting all commands \emph{above $n$}
which satisfy some other property, and deleting all commands \emph{below
$n$} satisfying a third property, or some combination of these
possibilities. The algorithm assumes that those deletions are performed
before proceeding to the next node.
 
In summary, make four passes through the commands, alternating top-down and
bottom up orders, and handle cases (1) to (4) in each pass. Return the set
of the final, non-deleted commands as the merger. The correctness and that
the algorithm can actually create all mergers by making appropriate choices
follow from the discussion above.
\end{algorithm}

The running time is linear if each pass can finish processing in linear
time. To ensure this, we keep additional flags at each node noting either
that required upward deletions have been done at and above this node, or
that downward deletions should be performed as necessary. Upward deletions
are performed immediately following the $\up$ pointers, but they abort once
encountering a node in which the relevant flag is already set. This ensures
that each node is visited at most once for this purpose, keeping the running
time linear.

Flags for downward deletions are checked whenever visiting a node in a
top-down pass. If the flag is set on the parent, set the flag on the current
node, and delete the necessary commands there. This ensures that the latest
deletions are applied just in time. During bottom-up passes, downward
deletions are not performed immediately as descendant nodes are not
processed again, but rather they are delayed until the next top-down pass or
an additional pass is executed for this purpose. As each flag is set at most
once, the running time is guaranteed to be linear. An easily accessible
implementation of this algorithm in Python can be found at
\url{https://github.com/csirmaz/algebraic-reconciler}.


\begin{figure*}[!b!th]\centering
\begin{tikzpicture}
\draw[->] (0,-0.1)--(0,3.4);
\foreach \y in {1,2,3} {
\draw (0,\y) node[left] {\footnotesize \y}; 
\draw[ultra thin,gray] (-0.1,\y)--(11.0,\y);
}
\foreach \y in {0} {
\draw (0,2*\y) node[left] {\footnotesize \y}; 
\draw[->] (-0.1,2*\y)--(11.0,2*\y);
}

\draw (-0.8,1.6) node[rotate=90] {\footnotesize running time (sec)};

\foreach \x in {5,10,15,20,25,30}{
  \draw (\x*0.325,-0.1)--(\x*0.325,+0.1);
  \draw (\x*0.325,-0.07) node[below] {\footnotesize \x k};
}
\draw (0.1,-0.8)node[right] {\footnotesize total number of commands to
synchronize};
\definecolor{MycolorB}{rgb}{0.2357,0.3038,0.7918}
\definecolor{MycolorC}{rgb}{0.1739,0.3790,0.3613}
\definecolor{MycolorD}{rgb}{0.9361,0.4891,0.1668}
\definecolor{MycolorE}{rgb}{0.6992,0.8527,0.4888}
\definecolor{MycolorF}{rgb}{0.7338,0.7612,0.1635}
\definecolor{MycolorG}{rgb}{0.1966,0.7285,0.7622}
\definecolor{MycolorH}{rgb}{0.5825,0.8959,0.7113}
\definecolor{MycolorI}{rgb}{0.9618,0.2300,0.2120}
\definecolor{MycolorJ}{rgb}{0.5118,0.9632,0.9666}
\definecolor{MycolorK}{rgb}{0.2979,0.5561,0.2645}
\definecolor{MycolorL}{rgb}{0.2467,0.0027,0.3078}
\definecolor{MycolorM}{rgb}{0.0903,0.5002,0.1743}
\draw plot[only marks,mark=*,mark options={fill=MycolorB,fill opacity=0.4,opacity=0.4}] coordinates {
(0.39975,0.07573232675)
(0.28275,0.057928015225)
(1.42155,0.283680611)
(2.80995,0.656904507083333)
(0.51675,0.102655221725)
(0.66885,0.124463646275)
(0.24375,0.042903902425)
(0.11115,0.0198343472583333)
(0.16965,0.0285400840833333)
(1.24215,0.246289507725)
(2.80995,0.64577974525)
(0.12285,0.0207560465333333)
(2.59545,0.592601150833333)
(0.99645,0.199571153975)
(0.16965,0.03622351425)
(0.99645,0.189674790166667)
(0.55575,0.10889211875)
(0.15795,0.02917778405)
(0.28275,0.0488434064083333)
(2.59545,0.58638932975)
(2.12355,0.4583731885)
(1.56195,0.32944593825)
(0.07605,0.016709537245)
(1.84275,0.386130110583333)
(0.18135,0.034749487275)
(0.51675,0.0891325711666667)
(0.43875,0.0765479537333333)
(0.09945,0.021223391245)
(0.32175,0.05422384875)
(0.39975,0.0693859797583333)
(1.07835,0.207176246583333)
(0.43875,0.083719797525)
(0.12285,0.029334665025)
(0.75075,0.1514433475)
(0.11115,0.02148846448)
(0.91455,0.171461389583333)
(0.91455,0.1735571525)
(0.83265,0.167121243525)
(0.24375,0.049062090475)
(1.16025,0.2419465935)
(3.23895,0.7698938495)
(0.36075,0.0621698537416667)
(0.18135,0.0293871184083333)
(1.98315,0.42728831125)
(0.08775,0.0186800237575)
(0.32175,0.0586504480833333)
(1.56195,0.311054652916667)
(1.70235,0.354297071333333)
(1.07835,0.20849667875)
(3.02445,0.69380260875)
(0.14625,0.0263472275166667)
(0.75075,0.1405450695)
(0.13455,0.030940499225)
(0.09945,0.0159034934166667)
(0.66885,0.117228171083333)
(1.42155,0.27916255125)
(0.59475,0.1051395795)
(1.24215,0.241619070833333)
(1.84275,0.3848548225)
(0.08775,0.013971381925)
(0.07605,0.0127957702583333)
(0.55575,0.100737372416667)
(0.15795,0.0270596824166667)
(0.47775,0.091287771)
(0.59475,0.1128453155)
(3.02445,0.712199628333333)
(0.83265,0.15234277725)
(0.47775,0.0819241174)
(1.70235,0.35407329)
(0.14625,0.0288922557)
(0.13455,0.02090982575)
(1.98315,0.433724270333333)
(2.12355,0.475678694916667)
(1.16025,0.2238268405)
(0.36075,0.07069734)
(3.23895,0.7682508265)
};
\draw plot[only marks,mark=*,mark options={fill=MycolorC,fill opacity=0.4,opacity=0.4}] coordinates {
(0.217425,0.04714281325)
(2.530125,0.56576352925)
(0.989625,0.196398897)
(0.217425,0.0364395038166667)
(2.319525,0.4971811505)
(3.857425,0.9107787485)
(0.593125,0.101509701166667)
(0.147225,0.024242321675)
(2.530125,0.558628731666667)
(0.827125,0.148972780333333)
(4.500925,1.20646560666667)
(2.108925,0.443924005)
(0.199875,0.0324329555833333)
(2.108925,0.463569141083333)
(0.534625,0.106566686025)
(0.827125,0.162457040275)
(0.234975,0.049474863475)
(0.885625,0.154805938166667)
(2.319525,0.500542894)
(0.885625,0.170154965225)
(0.768625,0.147633985)
(0.252525,0.0500650915)
(4.179175,1.025429545)
(0.476125,0.091889825975)
(2.951325,0.661638837)
(0.164775,0.0286540278333333)
(0.112125,0.024028153755)
(1.726725,0.3502195775)
(0.359125,0.0603304423333333)
(3.857425,0.911191859166667)
(1.112475,0.218590375525)
(1.235325,0.233551502666667)
(1.603875,0.3289130405)
(0.534625,0.0925104545833333)
(0.234975,0.0392535466666667)
(0.129675,0.024100445475)
(1.481025,0.29768069275)
(0.651625,0.111444231)
(3.161925,0.723145841)
(1.726725,0.3475445845)
(0.651625,0.13216282)
(1.358175,0.28212261675)
(2.740725,0.61775388675)
(4.822675,1.29031582166667)
(2.740725,0.616670852833333)
(0.417625,0.08489414475)
(0.710125,0.124721446916667)
(0.182325,0.0307284413166667)
(4.822675,1.2623448215)
(2.951325,0.669616886333333)
(0.417625,0.0725427564166667)
(1.849575,0.373203920083333)
(0.710125,0.130883429975)
(1.481025,0.291593747833333)
(0.270075,0.0450960323333333)
(0.182325,0.035677428475)
(4.500925,1.16091918875)
(1.235325,0.2461238035)
(1.849575,0.3839332865)
(1.358175,0.260751908916667)
(0.112125,0.0178945085083333)
(0.270075,0.04761756875)
(1.603875,0.3207644545)
(0.476125,0.0799950429333333)
(0.164775,0.033118396525)
(1.112475,0.2093018005)
(0.593125,0.109889917525)
(0.129675,0.02781840305)
(4.179175,1.07806865925)
(0.989625,0.180985798833333)
(0.359125,0.07482583775)
(3.161925,0.725086621)
(0.199875,0.036615963775)
(0.252525,0.0397010565833333)
(0.768625,0.132575635333333)
(0.147225,0.0259947638)
};
\draw plot[only marks,mark=*,mark options={fill=MycolorD,fill opacity=0.4,opacity=0.4}] coordinates {
(0.8645,0.15631499725)
(2.2932,0.48169452225)
(2.457,0.52940633075)
(0.6305,0.119234568275)
(2.7963,0.612688233333333)
(0.195,0.03261352535)
(3.9195,0.91406108625)
(0.147225,0.031535733275)
(0.7085,0.120784952166667)
(1.3104,0.24877237725)
(0.2418,0.0400875224333333)
(3.3579,0.78219676575)
(0.3354,0.05949666425)
(4.2003,0.97543822075)
(0.195,0.0359794415)
(6.4064,1.75913182166667)
(5.9774,1.6443938775)
(3.0771,0.694735603333333)
(4.2003,0.987943895)
(1.1765,0.218275337525)
(0.147225,0.0256190454916667)
(0.5525,0.0955306224166667)
(3.6387,0.82616025625)
(2.2932,0.478231831)
(0.7865,0.14921226675)
(5.1194,1.41367019583333)
(0.3588,0.065383259)
(1.9656,0.3980762355)
(0.312,0.0534012283583333)
(3.0771,0.6858366095)
(3.9195,0.9070792075)
(0.2184,0.034411966675)
(6.4064,1.72776309)
(0.5525,0.102014952275)
(0.2418,0.047069603275)
(0.6305,0.107094201083333)
(2.7963,0.5952525645)
(0.1716,0.0281929219083333)
(1.0985,0.20204869675)
(2.1294,0.44641492375)
(2.1294,0.444086825333333)
(1.4742,0.280351419916667)
(0.469625,0.0822024126833333)
(1.0205,0.18229428025)
(2.457,0.519720804583333)
(0.9425,0.1868660715)
(1.0205,0.200247892225)
(0.2652,0.0433390735)
(1.638,0.3234450485)
(3.6387,0.834842184166667)
(0.7085,0.125303189775)
(0.3354,0.0516684226583333)
(0.2886,0.049473978525)
(3.3579,0.765141856416667)
(0.1716,0.0314173235)
(0.7865,0.136790876166667)
(0.9425,0.166670911833333)
(5.9774,1.59497429675)
(0.312,0.0548497575)
(1.0985,0.2085461165)
(5.1194,1.36640112725)
(0.2652,0.0402433428333333)
(1.9656,0.397757323666667)
(0.469625,0.087901940975)
(1.1765,0.211968431083333)
(1.8018,0.357581466916667)
(1.638,0.334292084)
(0.3588,0.0563362491666667)
(1.8018,0.3676856885)
(5.5484,1.54116724083333)
(0.2184,0.04210077375)
(5.5484,1.4997040305)
(0.2886,0.0475560020916667)
(1.4742,0.289977612)
(0.8645,0.1646130475)
(1.3104,0.247867485583333)
};
\draw plot[only marks,mark=*,mark options={fill=MycolorE,fill opacity=0.4,opacity=0.4}] coordinates {
(0.418275,0.07096711875)
(0.784875,0.147358067225)
(0.272025,0.04371439535)
(1.174875,0.211986311916667)
(6.381375,1.73661390225)
(2.450175,0.520684826)
(1.835925,0.37044406625)
(4.536675,1.14717049625)
(3.064425,0.67173229175)
(2.654925,0.567358128416667)
(3.064425,0.67286947575)
(7.453875,2.07434740916667)
(2.859675,0.619839917833333)
(1.835925,0.360088416166667)
(2.245425,0.458547346)
(6.381375,1.74733313333333)
(0.979875,0.185206304)
(1.272375,0.242188322725)
(0.301275,0.058977112475)
(2.040675,0.421771344666667)
(0.784875,0.136492829416667)
(7.453875,2.034547059)
(0.389025,0.063876455925)
(0.330525,0.05739600475)
(4.185675,0.977965836)
(0.882375,0.161057433916667)
(0.6825,0.128672270725)
(3.483675,0.789862879083333)
(6.917625,1.90727883833333)
(1.467375,0.289090074)
(1.272375,0.231746542416667)
(0.242775,0.042525284275)
(4.536675,1.08788414333333)
(2.040675,0.4244929915)
(2.245425,0.46617786625)
(2.859675,0.60790942925)
(0.272025,0.0558310905)
(3.834675,0.875248606666667)
(1.077375,0.200380023275)
(0.242775,0.0401496251666667)
(0.21255,0.03602294575)
(4.185675,0.98403153)
(6.917625,1.8708110895)
(2.450175,0.51193815575)
(1.631175,0.3213138535)
(2.654925,0.5542539015)
(0.6825,0.117123510166667)
(7.990125,2.23313019083333)
(3.834675,0.888435594)
(1.174875,0.22337427375)
(4.887675,1.2409866615)
(0.359775,0.0577784990833333)
(1.369875,0.252775091166667)
(0.979875,0.182175339)
(5.238675,1.36561498175)
(0.21255,0.04541418325)
(0.447525,0.0734764728583333)
(3.483675,0.79926723325)
(1.467375,0.271644422416667)
(5.238675,1.39423078583333)
(0.359775,0.066858534475)
(1.077375,0.194923031083333)
(0.418275,0.070849660025)
(0.330525,0.0535001598416667)
(0.389025,0.07048482725)
(1.369875,0.25770841825)
(0.447525,0.08692165575)
(7.990125,2.1359348585)
(4.887675,1.26144670333333)
(0.882375,0.176701045525)
(0.301275,0.0501259055083333)
(1.631175,0.3303060805)
};
\draw plot[only marks,mark=*,mark options={fill=MycolorF,fill opacity=0.4,opacity=0.4}] coordinates {
(1.29025,0.23425741325)
(1.52425,0.304347355)
(1.52425,0.284383413666667)
(2.19765,0.459460766666667)
(0.934375,0.17983832575)
(1.40725,0.26049118225)
(1.05625,0.189702189166667)
(1.75825,0.34434568575)
(1.40725,0.27998083025)
(0.289575,0.05576489625)
(0.32565,0.06559125125)
(0.36075,0.0641725385)
(8.93035,2.52666339)
(1.64125,0.31258076775)
(0.46605,0.0804871828416667)
(2.93475,0.635498780166667)
(7.64335,2.0589566845)
(1.29025,0.248257683775)
(0.50115,0.0828152268416667)
(2.68905,0.576079234916667)
(5.85585,1.567093485)
(5.01345,1.3245373975)
(3.18045,0.691821863583333)
(0.43095,0.0766935045)
(2.93475,0.64453852525)
(8.28685,2.21926220175)
(3.18045,0.698321082)
(2.44335,0.516403374333333)
(0.32565,0.0543339144166667)
(9.57385,2.84755172)
(0.289575,0.05270495085)
(0.43095,0.0689084355)
(6.27705,1.61907884725)
(8.93035,2.5235782175)
(1.64125,0.313014851416667)
(5.01345,1.2888229)
(1.17325,0.222454160725)
(0.50115,0.0978499555)
(2.19765,0.4550064275)
(0.39585,0.0655324050666667)
(5.43465,1.43451408583333)
(9.57385,2.73339653)
(0.39585,0.07174716875)
(4.17105,0.961631114166667)
(8.28685,2.29256965583333)
(1.05625,0.1995196595)
(6.27705,1.6872147625)
(2.68905,0.57308048475)
(3.42615,0.7613850765)
(3.67185,0.829016496166667)
(1.75825,0.337155244166667)
(3.67185,0.81991787725)
(2.44335,0.51321256525)
(0.934375,0.165612275166667)
(1.940575,0.4006152825)
(4.59225,1.16532185)
(1.17325,0.21278841375)
(1.940575,0.386279504833333)
(5.43465,1.44292192125)
(0.53625,0.0898539569166667)
(0.36075,0.0576372207583333)
(7.64335,2.1056618025)
(4.17105,0.960585771)
(0.53625,0.102396606475)
(0.46605,0.090254106975)
(4.59225,1.15362362)
(5.85585,1.53685537875)
(3.42615,0.760269652666667)
};
\draw plot[only marks,mark=*,mark options={fill=MycolorG,fill opacity=0.4,opacity=0.4}] coordinates {
(0.461175,0.084395640475)
(0.420225,0.07672725475)
(9.656075,2.8906160125)
(1.366625,0.268074989)
(3.419325,0.759723722916667)
(1.776125,0.3502288175)
(3.132675,0.67812681525)
(6.332625,1.69830077333333)
(9.656075,2.83057372)
(2.846025,0.61010479)
(2.846025,0.5987650545)
(0.543075,0.101145758775)
(2.548,0.526334184083333)
(0.420225,0.0684987408166667)
(5.349825,1.37547000375)
(0.3783,0.074595501475)
(0.461175,0.0754404405083333)
(1.776125,0.343722281166667)
(5.349825,1.4035862075)
(2.049125,0.39697061225)
(0.502125,0.0811092212833333)
(1.503125,0.279862413)
(3.419325,0.7528076135)
(0.624975,0.115101011)
(6.824025,1.80783686)
(7.315425,1.95568780775)
(0.584025,0.0972886264166667)
(0.502125,0.092717465475)
(10.406825,3.15670403583333)
(1.22525,0.238694847225)
(2.049125,0.40464249225)
(1.366625,0.251781736416667)
(10.406825,3.0934097525)
(1.639625,0.310584711416667)
(3.992625,0.916001786)
(1.22525,0.223635732583333)
(4.279275,0.976110515)
(0.543075,0.0874781779166667)
(6.824025,1.76457235975)
(3.132675,0.68631838075)
(8.905325,2.437398415)
(0.3783,0.0620494589166667)
(8.905325,2.438678055)
(1.912625,0.370939075583333)
(1.503125,0.28845296375)
(1.912625,0.38672036225)
(3.705975,0.823246265916667)
(4.279275,0.977080104)
(2.548,0.52833691475)
(5.841225,1.5582851175)
(0.624975,0.101980884833333)
(5.841225,1.52715812825)
(1.639625,0.327762098)
(3.705975,0.83765267025)
(4.858425,1.2446206415)
(6.332625,1.64768082075)
(7.315425,1.9847471675)
(0.584025,0.103800546025)
(4.858425,1.21255818416667)
(3.992625,0.894890469166667)
};
\draw plot[only marks,mark=*,mark options={fill=MycolorH,fill opacity=0.4,opacity=0.4}] coordinates {
(0.7137,0.117844930416667)
(0.5733,0.0937186195833333)
(10.1673,3.0384368575)
(6.1074,1.5576763125)
(0.6669,0.125092657475)
(10.1673,2.98149372)
(1.716,0.3430069345)
(0.478725,0.0768331261083333)
(7.7922,2.078377206)
(4.2315,0.962603501666667)
(6.1074,1.5928196915)
(4.5591,1.0404819225)
(4.2315,0.95703741625)
(1.872,0.365450604416667)
(3.237325,0.70955055875)
(1.716,0.32528447425)
(3.5763,0.7867909095)
(4.8867,1.20502825833333)
(8.3538,2.29164642916667)
(5.525325,1.38561958575)
(2.028,0.410788798)
(0.6669,0.112491993916667)
(2.34,0.47720345725)
(4.5591,1.037794923)
(0.5265,0.0874082449166667)
(0.478725,0.083572521275)
(7.7922,2.11660866583333)
(3.237325,0.692676326916667)
(2.184,0.44270623225)
(3.9039,0.8809293)
(0.7137,0.13008182)
(1.555125,0.289277479)
(3.9039,0.87777263775)
(7.2306,1.93503850166667)
(0.6201,0.102320195333333)
(5.525325,1.41401931916667)
(0.6201,0.124606332025)
(2.028,0.396611601166667)
(0.5733,0.10789428925)
(4.8867,1.198524407)
(3.5763,0.777806361583333)
(7.2306,1.88749878525)
(8.3538,2.2773293465)
(6.669,1.724927632)
(1.555125,0.2989918475)
(2.184,0.43740785275)
(0.5265,0.0960321025)
(2.34,0.47657820125)
(1.872,0.3635450645)
(6.669,1.769584085)
};
\draw plot[only marks,mark=*,mark options={fill=MycolorI,fill opacity=0.4,opacity=0.4}] coordinates {
(2.104375,0.42347570225)
(8.760375,2.34874099775)
(0.802425,0.141662357975)
(0.644475,0.114593986775)
(6.8445,1.77022734125)
(0.697125,0.12719061525)
(0.697125,0.1182466805)
(2.279875,0.4560415045)
(0.749775,0.123664575416667)
(7.496775,2.03882217666667)
(0.802425,0.13261122375)
(6.8445,1.777464045)
(4.00855,0.886408866666667)
(2.455375,0.504041412833333)
(4.757025,1.18419287166667)
(2.630875,0.546657258333333)
(9.392175,2.71866791583333)
(8.128575,2.19588545775)
(0.644475,0.106721847583333)
(5.494125,1.39852675166667)
(0.749775,0.14049498625)
(2.630875,0.5575019105)
(0.59085,0.09708878875)
(5.125575,1.290004115)
(0.59085,0.114125234)
(2.279875,0.461156543583333)
(4.00855,0.89288888575)
(4.388475,0.999951981666667)
(8.760375,2.376856115)
(4.757025,1.18858780025)
(9.392175,2.6698567425)
(1.924,0.3685028035)
(4.388475,1.0060128735)
(1.924,0.38801342875)
(7.496775,1.99996943325)
(5.125575,1.30733884166667)
(8.128575,2.19435151833333)
(5.494125,1.4127632675)
(2.104375,0.42289685925)
(2.455375,0.50543383375)
};
\draw plot[only marks,mark=*,mark options={fill=MycolorJ,fill opacity=0.4,opacity=0.4}] coordinates {
(8.304075,2.20973033166667)
(2.72675,0.564936866833333)
(0.89115,0.160065951525)
(0.83265,0.15117684675)
(9.02655,2.43362176416667)
(5.28255,1.34630340583333)
(2.331875,0.468830063083333)
(10.43055,3.0647487)
(9.02655,2.415401753)
(4.861675,1.19525507416667)
(8.304075,2.19659534125)
(2.331875,0.4787537185)
(9.72855,2.7688240675)
(0.83265,0.13997997)
(0.89115,0.153261930666667)
(5.69205,1.45729239725)
(0.77415,0.13317950225)
(9.72855,2.82287267166667)
(5.28255,1.36460965125)
(2.53175,0.528686935)
(0.714675,0.118797336416667)
(0.714675,0.1389149105)
(10.43055,3.06894953083333)
(6.10155,1.57186669833333)
(0.77415,0.14736304575)
(5.69205,1.460829845)
(6.10155,1.574418065)
(2.53175,0.520931334666667)
(4.861675,1.19084438075)
(2.72675,0.56544113075)
(2.92175,0.61638935775)
(2.92175,0.612311874833333)
};
\draw plot[only marks,mark=*,mark options={fill=MycolorK,fill opacity=0.4,opacity=0.4}] coordinates {
(0.8502,0.15846765145)
(9.90405,2.808819535)
(6.708975,1.743112378)
(9.90405,2.8915483375)
(3.212625,0.68351839025)
(5.7967,1.47260330833333)
(3.212625,0.6848109415)
(6.258525,1.61827853666667)
(10.696725,3.14551566666667)
(0.979875,0.174938659025)
(10.696725,3.15818831)
(0.915525,0.15539575175)
(0.915525,0.174865505025)
(2.77875,0.5882246865)
(5.7967,1.50587780725)
(0.8502,0.143715558)
(6.258525,1.6328321345)
(2.77875,0.572598625166667)
(0.979875,0.169332341916667)
(6.708975,1.72073509833333)
(2.998125,0.624702093166667)
(2.998125,0.6506987145)
};
\draw plot[only marks,mark=*,mark options={fill=MycolorL,fill opacity=0.4,opacity=0.4}] coordinates {
(6.813625,1.73074069333333)
(0.997425,0.189070724725)
(7.3164,1.902600887)
(3.5035,0.763433985)
(7.3164,1.9203400125)
(3.264625,0.691127327583333)
(3.5035,0.754278283583333)
(0.997425,0.1735111035)
(1.0686,0.189536858083333)
(1.0686,0.2027947085)
(3.264625,0.71048551125)
(6.813625,1.7505284545)
};
\draw plot[only marks,mark=*,mark options={fill=MycolorM,fill opacity=0.4,opacity=0.4}] coordinates {
(7.91245,2.07452007625)
(7.91245,2.05006716166667)
(1.15635,0.2038685685)
(3.7895,0.84239777075)
(3.7895,0.825199096583333)
(1.15635,0.218940525025)
};
\end{tikzpicture}
\caption{\fontsize{9}{10}\selectfont
Running time for generating the synchronized state on several synthetic data
sets. The running time depends only on the total number of filesystem
commands ($x$ axis), and not on the number of replicas (color).
}\label{fig:running-time}%
\end{figure*}
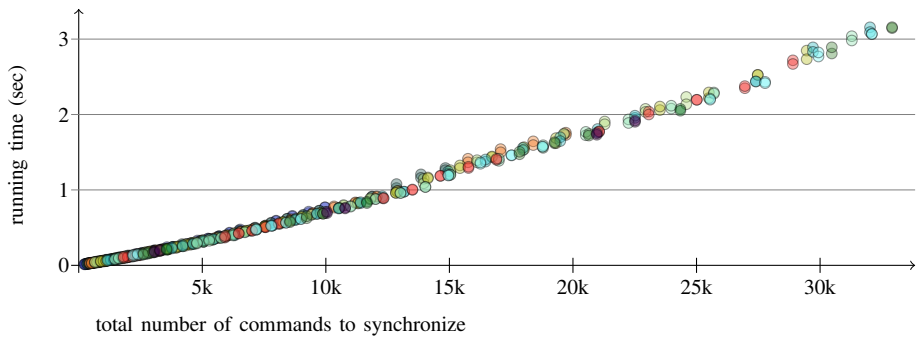

\section{Empirical results}\label{sec:results}

The performance of the algorithms has been tested on several synthetic data
sets. These sets consist of the collection of the canonical command
sequences the replicas executed on a common filesystem. This initial
filesystem is determined by two integer parameters $S$ and $T$. It has
non-empty nodes on the topmost three levels only. These nodes are labeled by
the paths \textsf{/i}, \textsf{/i/j} and \textsf{/i/j/k} where \textsf{i},
\textsf{j} and \textsf{k} are numbers between $0$ and $S-1$ such that
$(\textsf{i}, \textsf{j})$ and $(\textsf{j}, \textsf{k})$ are not farther from
each other modulo $S$ than $T$. The filesystem contains different files at
the non-empty nodes \text{/i/j/k}, and contains directories at all other
non-empty nodes. Typical parameter values are $T=2$ and $S=10$.

The set of command sequences to be synchronized also depends on the number
of users (replicas), which varies between $2$ and $S-1$. User \textsf{u} for
$0\le\textsf{u}\le S-1$ makes the following extensive changes on the
filesystem:
\begin{itemz}
\item[1)] deletes all existing files at \textsf{/i/u/k} for all \textsf{i} and
\textsf{k};
\item[2)] removes (the now empty) directories at \textsf{/i/u} for all
\textsf{i};
\item[3)] for each \textsf{x} in $(\textsf{u}-1, \textsf{u}, \textsf{u}+1)$
modulo $S$ and for all \textsf{i} and for all $\textsf{j}\neq\textsf{u}$
changes the file at \textsf{/i/j/x} (if exists) to a directory; finally
\item[4)] under each newly created directory creates $S$ new files with
unique content. These files are placed at the nodes with paths
\textsf{/i/j/x/l} where $0\le\textsf{l}<S$.
\end{itemz}
Depending on the parameter values $S$ and $T$ the number of necessary
filesystem commands achieving these changes varies between 100 and 8000 per
user. The instructions have been chosen so that there are both a large
number of conflicts and also a large number of non-conflicting command
pairs, forcing any conflict-based synchronizer to spend quadratic time even
to check the existence of conflicts. The command sequences have many
symmetries to ensure that the order in which the commands or command pairs
are processed has little or no effect on the running time.

In the experiments we have tried 20 different filesystems with the parameter
$S$ running from 5 to 14 (inclusive), $T$ running from 1 to $[(S-1)/2]$, and
the number of users running from 2 to $S-1$, inclusive. The general
synchronization Algorithm \ref{alg:xxx} from Section \ref{sec:all-mergers}
was called on the resulting collection of sequences to generate the first
and the first three possible synchronized states. Each run has been repeated
10 times to smooth out the effects of other programs running on the same
server. Figure \ref{fig:running-time} depicts the average time used to
generate a single synchronizing command set. The input size on the $x$ axis
is the total number of different commands in the sequences to be
synchronized. The average running time on the $y$ axis is in seconds. The
unoptimized Python program was running on a desktop machine with an Intel(R)
Core(TM) i5-8250U CPU @ 1.60GHz processor and 8G memory. The different
colors represent the number of replicas. The results confirm that the
running time is subquadratic in the total input size and does not depend on
the number of replicas.


\section{Conclusion}\label{sec:conclusion}

This paper presented a provably correct synchronization algorithm running in
subquadratic time which can synchronize an arbitrary number of replicas. The
existence of such an algorithm was a long-standing open problem
\cite{syncpal-thesis} in the fields of Operation Transformation (OT)
\cite{SLL11} and Conflict-free Replicated Data Types (CRDT)
\cite{CRDT-overview}. Our work is based on the Algebraic Theory of
Filesystems (ATF) \cite{CC22a} which, in many respects, resembles both OT
and CRDT. Instead of traditional filesystem commands, ATF uses operations
enriched with contextual information similarly to OT and CRDT. It also
favors commutativity, but instead of requesting all operations to be
commutative, their non-commutative part can be isolated systematically and
handled separately. As a consequence, similarly to OT and CRDT, ATF can deal
with command sets instead of sequences where the execution order is not
specified, but the semantics of executing the commands \emph{in some order}
is defined unambiguously.

The underlying filesystem model, while arguably simplistic, retains the most
important high level and platform independent properties of real-life
filesystems; see Section \ref{subsec:filesystem} for a more detailed
discussion. The two most prominent omissions of this model are the lack of
\emph{directory attributes} (the model handles all directories as equal),
and \emph{links} breaking the regular tree-like structure of filesystem
paths. Both of these shortcomings, and how to circumvent them, are discussed
below.

Filesystem synchronization starts with \emph{update detection} run locally,
which extracts information encoding the state of the modified replica as
discussed in Section \ref{subsec:update-detector}. In the ATF framework this
information is a \emph{canonical command set} describing how to construct
the replica from the original filesystem.
 
The task of the synchronizer is to create a common, merged filesystem after
considering what changes have been applied to the replicas. In the ATF
framework this task amounts to creating another canonical command set, the
\emph{merger}, which transforms the original filesystem into the merged
filesystem. What this merger can be is described by two simple and
intuitively appealing principles: the \emph{intention-confined effect:}
operations in the merger should come from those supplied by the replicas,
and \emph{aggressive effect preservation}: the merger should contain as much
of those commands as possible. Definition \ref{def:merger} formalizes this
idea, and \emph{defines} what a synchronized state is. Sections
\ref{subsec:mergerOK} and \ref{subsec:operational} discuss that this
goal-driven definition automatically implies many operational properties,
such that a merger command set always defines a valid synchronized state;
and synchronization can be achieved from the local copy by rolling back some
of the local commands and executing additional ones originating from other
replicas. Section \ref{subsec:async-sync} shows that with minimal effort the
synchronization paradigm can be extended to tolerate replicas which do not
lock their filesystem -- allowing for asynchronous or optimistic
synchronization \cite{CRDT-overview}. Even replicas missing a
synchronization cycle can later upgrade to the synchronized state.

Algorithms described in Section \ref{sec:algorithms} give a high-level
description of a proof of concept implementation at
\url{https://github.com/csirmaz/algebraic-reconciler}. Algorithm
\ref{alg:merger} creates a merger in linear time after sorting the canonical
sets sent by the replicas. While this algorithm can make some
nondeterministic decisions, it cannot generate all possible mergers. Two
nondeterministic algorithms which can do so are sketched in Section
\ref{sec:all-mergers}; both of them run in near linear time. The first
algorithm uses the fact that Algorithm \ref{alg:merger} can recognize
mergers. It first creates a random subset of the supplied commands blindly,
and then checks if it is a merger. The second one, described as Algorithm
\ref{alg:xxx}, is more elaborate. It exploits the structure of refluent
command sets, and can assist in the synchronization process by highlighting
the consequences of different conflict resolutions.

\subsection{Node attributes}

The important task of consolidating different versions of the same document
(file value) was not considered as processing the internal structure of file
contents is outside the scope of filesystem synchronization. Files different
in content are considered to be different, and the synchronization algorithm
forces choosing one or the other. There is, however, an easy way of
incorporating third-party content-merging applications. This can be done by
pretending that there is only one possible file content, and using the ATF
framework to synchronize the \emph{structure} of the filesystems. When it
becomes clear which nodes contain file values, check for commands which
modified the actual content there, and use the external application to
determine the final file content. A similar approach can handle node
attributes by considering the changes made by the replicas at some node and
consolidating them. This approach, however, should be followed carefully. To
illustrate the problem, consider a directory at node $n$, which originally
had the ``private'' attribute. Replica $A$ changes this attribute to
``public'', while replica $B$, under the impression that the directory is
private, creates a file under it. When merging the attributes the change at
node $n$ is carried over, making the directory publicly available. This,
however, is clearly unacceptable. It is an interesting open problem to
incorporate node attributes into the ATF synchronization paradigm.

\subsection{Filesystems on directed acyclic graphs}

From the user's perspective a (hard or soft) \emph{link} between the nodes
$n$ and $n'$ is a promise, or a commitment, that the filesystem at and below
$n$ is exactly the same as at and below $n'$. In other words, the filesystem
acts as if the nodes $n$ and $n'$ in the filesystem skeleton were glued
together.

If the filesystem has many links and the links do not form loops, then after
this gluing the skeleton becomes a directed acyclic graph (DAG) with many
sources (the roots in the original skeleton). The gluing works in the other
direction, too: given any DAG with one or more sources, it can be
``unfolded'' into a forest. The paths of a tree-like filesystem can be
identified with the directed paths starting from a source, and two nodes are
``linked'' if the directed paths in the DAG lead to the same vertex. A DAG
vertex $v$ represents the collection of all nodes in the unfolded filesystem
which are determined by the directed paths in the DAG which lead to $v$. Two
nodes of this unfolded filesystem are \emph{equivalent}, written as
$n_1\simeq n_2$ if the corresponding directed DAG paths lead to the same
vertex. It is clear that $\simeq$ is an equivalence relation, and factoring
the filesystem by $\simeq$ yields the DAG. If every vertex in the DAG has
finite indegree, then the equivalence classes are also finite.

Operations on a DAG-based filesystem can be mimicked on the unfolded
filesystem by simply requesting that an operation performed on the DAG
vertex $v$ be done on all nodes represented by $v$. Similarly, a command set
$A$ on the unfolded filesystem corresponds to the command set $A/{\simeq}$
on the DAG-based filesystem if with every command $\sigma\in A$ all commands
$\simeq$-equivalent to $\sigma$ are also in $A$. We call these command sets
$\simeq$-invariant. Requesting all command sets to be $\simeq$-invariant,
Claims, Propositions and Theorems in this paper remain true. (Remark that in
this case the definition of a merger should require $M$ to be
$\simeq$-invariant.) Similarly, all algorithms continue to work, but they
must handle not commands but sets of $\simeq$-equivalent commands.
Consequently, time estimates are no longer valid. In summary, our results
and algorithms remain valid on filesystems based on arbitrary DAGs. It is an
open question whether the algorithms can be implemented in linear time in
the general case.


\section*{Acknowledgment}

The work of the second author (L.Cs) was partially supported by the ERC
Advanced Grant ERMiD.
G\'abor Tardos' contribution for devising Algorithm \ref{alg:add-up}
is gratefully acknowledged.

\bibliography{synchronizing}
\end{document}